\documentclass[11pt]{article}

\addtolength{\textwidth}{2cm}
\addtolength{\oddsidemargin}{-1cm}

\usepackage[T1]{fontenc}

\usepackage{amsmath,amssymb,amsthm}
\usepackage{graphicx}
\usepackage{psfrag}
\usepackage[active]{srcltx}

\newcommand{\R}{\mathbb{R}}

\newcommand{\D}{{\cal D}}
\newcommand{\ot}{\otimes}

\newcommand{\bld}[1]{\boldsymbol{#1}}

\DeclareMathOperator{\eee}{\!\text{\em e}}
\newcommand{\src}{{\sideset{^s}{}{\eee}}}
\newcommand{\tar}{{\sideset{^t}{}{\eee}}}
\DeclareMathOperator{\intr}{{\rm Int}}

\DeclareMathOperator{\sgn}{{\rm sgn}}
\DeclareMathOperator{\supp}{{\rm supp}}
\newcounter{mnotecount}[section]

\newtheorem{thr}{Theorem}
\newtheorem{lm}[thr]{Lemma}

\newtheorem{cor}[thr]{Corollary}

\numberwithin{equation}{section}
\numberwithin{thr}{section}

\begin{document}

\title{Variables suitable for constructing quantum states for the Teleparallel Equivalent of General Relativity I\footnote{This is an author-created version of a paper published as {\em Gen. Rel. Grav.} {\bf 46} 1620 (2014) DOI 10.1007/s10714-013-1620-z.}}
\author{ Andrzej Oko{\l}\'ow}
\date{March 3, 2014}

\maketitle
\begin{center}
{\it  Institute of Theoretical Physics, Warsaw University\\ ul. Ho\.{z}a 69, 00-681 Warsaw, Poland\smallskip\\
oko@fuw.edu.pl}
\end{center}
\medskip

\begin{abstract}
We present the first part of an analysis aimed at introducing variables which are suitable for constructing a space of quantum states for the Teleparallel Equivalent of General Relativity via projective techniques---the space is meant to be applied in a canonical quantization of the theory. We show that natural configuration variables on the phase space of the theory can be used to construct a space of quantum states which however possesses an undesired property. We introduce then a family of new variables such that some elements of the family can be applied to build a space of quantum states free of that property. 
\end{abstract}

\section{Introduction \label{intro}}

A formulation of general relativity called Teleparallel Equivalent of General Relativity (TEGR)\footnote{See \cite{mal-rev} for the newest review on TEGR.} has not been yet used as a starting point for a quantization of gravity \cite{app,carlip}. Since nowadays no existing approach to quantum gravity seems to be fully successful it is worth to check whether it is possible to construct a model of quantum gravity based on TEGR. In this paper we will address an issue of constructing a space of quantum states for TEGR which could be applied in the procedure of canonical (or a canonical-like) quantization of the theory.

A Hamiltonian analysis of TEGR \cite{nester,bl,maluf-1,maluf,oko-tegr} shows that it is a constrained system. Since we do not expect that constraints on the phase space of TEGR can be solved classically we would like to apply the Dirac's approach to canonical quantization of constrained systems. According to this approach one first constructs a space of kinematic quantum states, that is, quantum states which correspond to classical states constituting the unconstrained phase space, next among kinematic quantum states one distinguishes physical quantum states as those corresponding to classical states which satisfy all constraints. Thus our goal is to construct a space of kinematic quantum states for TEGR. 

Since TEGR is a background independent theory it is desirable to construct a space of quantum states for it in a background independent manner. Methods which provide a construction of this sort are known from Loop Quantum Gravity (LQG)---see e.g. \cite{rev,rev-1} and references therein---but because of a reason explained below they are rather not applicable to TEGR. Therefore we are going to construct the desired space for TEGR by means of a general method \cite{q-nl} deliberately developed for this purpose. This method works as follows. 

The starting point for the method is a phase space of a theory of the form $P\times \Theta$, where $P$ is a space of momenta, and $\Theta$ is a (Hamiltonian) configuration space (that is, a space of ``positions''). One starts the construction by choosing a set $\cal K$ of real functions on {$\Theta$} called {\em configurational elementary degrees of freedom}. Analogously, one chooses a set of {\em momentum elementary degrees of freedom} consisting of some real functions on $P$. Next, one defines a special directed set $(\Lambda,\geq)$---each element of this set corresponds to a finite collection of both configurational and momentum elementary d.o.f.---and with every element $\lambda$ of $\Lambda$ one associates a set of quantum states denoted by $\D_\lambda$.   

Given $\lambda\in\Lambda$, the set $\D_\lambda$ of quantum stated is constructed as follows. The element $\lambda$ corresponds to a finite set $K$ of configurational d.o.f.. One uses the d.o.f. in $K$ to reduce ``infinite-dimensional'' space $\Theta$ to a finite dimensional space $\Theta_K$---this reduction consists in identifying all points of $\Theta$ for which each d.o.f. in $K$ gives the same value. Then one defines a Hilbert space of functions on $\Theta_K$ square integrable with respect to a measure on $\Theta_K$. The set $\D_\lambda$ is a set of all density operators (i.e. positive operators of trace equal $1$) on this Hilbert space---because density operators represent some (mixed, in general,) quantum states one can treat $\D_\lambda$ as a set of such states.  

In this way one obtains a family $\{\D_\lambda\}_{\lambda\in\Lambda}$ of sets of quantum states. If the set $(\Lambda,\geq)$ is chosen properly then it naturally generates on $\{\D_\lambda\}_{\lambda\in\Lambda}$  the structure of a projective family. Finally, the desired space of kinematic quantum states related to the original phase space $P\times \Theta$ is defined as the projective limit of the family.               

As shown in \cite{q-nl}, the task of constructing such a space of quantum states reduces to a construction of a directed set $(\Lambda,\geq)$ satisfying some assumptions---these assumptions are imposed both on elementary d.o.f. constituting elements of $\Lambda$ and the relation $\geq$. Since now a directed set $(\Lambda,\geq)$ satisfying all these assumption will be called {\em proper directed set $(\Lambda,\geq)$.}

The goal of the present paper is to find variables on the (Hamiltonian) configuration space $\Theta$ of TEGR which are suitable for constructing a proper directed set $(\Lambda,\geq)$ for the theory. More precisely, we are looking for variables on the configuration space which provide a set $\cal K$ of configurational d.o.f. such that 
\begin{enumerate}
\item d.o.f. in $\cal K$ separate points of $\Theta$; \label{as-sep}
\item d.o.f. in $\cal K$ are defined via integrals of functions of components of the variables; the functions are polynomials of the components of degree $1$; \label{as-poly}  \item there exists a directed set elements of which are finite subsets of $\cal K$ such that for every element $K$ of the directed set there exists a natural bijection from $\Theta_K$ onto $\R^N$, where $N$ is the number of d.o.f. in $K$; \label{as-bij}
\item d.o.f. in $\cal K$ are defined in a background independent way  i.e. without application of any background field. \label{as-bgi}
\end{enumerate}

The first three {\bf Assumptions} above correspond to some assumptions imposed in \cite{q-nl} on a proper set $(\Lambda,\geq)$. The present Assumption \ref{as-sep} can be found in Section 2 of \cite{q-nl} containing preliminaries and the Assumption \ref{as-poly} above describes a practical way to satisfy Assumption 3b of \cite{q-nl} (see Section 3.2 and Section 6.2 in that paper). The present Assumption \ref{as-bij} corresponds to Assumption 2 of \cite{q-nl} (see Section 3.2 in that paper). Let us note that the original Assumption 2 is imposed on every finite subset $K$ of $\cal K$ which (together with a finite set $\hat{F}$ of momentum d.o.f.) constitute an element of $(\Lambda,\geq)$: ``if $(\hat{F},K)\in\Lambda$, then...''. But we do not have any set $(\Lambda,\geq)$ for TEGR yet---we are at a stage of preparations for constructing such a set---and therefore we cannot impose the original Assumption 2 as it is formulated in \cite{q-nl}. Instead, we require the existence of a directed set consisting of some special finite subsets of $\cal K$---formulating in this way the present Assumption \ref{as-bij} we hope that a directed set of this sort may facilitate a construction of a proper directed set $(\Lambda,\geq)$ for TEGR. Finally, Assumption \ref{as-bgi} express our wish to construct quantum states for TEGR in a background independent manner.

Results of our inquiries can be summarized as follows: we will find two kinds of variables on the configurations space $\Theta$ of TEGR which not only satisfy the four assumptions above but can be actually used in a background independent manner to construct two distinct spaces of quantum states for TEGR. One of these variables are natural configurational variables on the phase space of TEGR, that is, one-forms $(\theta^A)$, $A=0,1,2,3$, defined on a three-dimensional manifold being a space-like slice of a spacetime. We will show, however, that the space of quantum states derived from these variables possesses an undesired property. Therefore we will transform the natural variables obtaining a family of new variables such that some elements of the family can be used to build a space of quantum states for TEGR free of that property---a construction of this space will be presented in \cite{q-stat}.     

Let us emphasize that the analysis of variables suitable for constructing a space of quantum states for TEGR will be continued in an accompanying paper \cite{ham-nv} where we will analyze more closely the family of new variables. 

Some constructions presented in the present paper are similar to (elements of) a construction of a space of kinematic quantum states for a simple background independent theory called Degenerate Pleba\'nski Gravity (DPG)---the latter construction is described in \cite{q-nl}. It seems to us that it may be quite helpful for the reader to study first the construction in \cite{q-nl} since it is simpler that ones described here.

Let us finally explain why the LQG methods of constructing quantum states do not seem to be applicable to TEGR. The reason is quite simple: the methods require finite dimensional spaces {$\{\Theta_K\}$} to be {\em compact}\footnote{See \cite{non-comp} for a discussion of obstacles which appear if one tries to apply the LQG methods for non-compact spaces {$\{\Theta_K\}$}.} and it is rather difficult to obtain naturally such spaces in the case of TEGR.  

The paper is organized as follows: Section 2 contains preliminaries, in Section 3 we consider the natural variables $(\theta^A)$ and explain why the space of quantum state constructed from them  does not seem to be very promising for canonical quantization of TEGR. In Section 4 we present the family of new variables. Section 5 contains a short summary and an outline of the analysis to be presented in the accompanying paper \cite{ham-nv}. In Appendix we prove two very important lemmas which guarantee that both kinds of variables considered in this paper provide d.o.f. satisfying Assumption \ref{as-bij} above.   

\section{Preliminaries}

\subsection{Vector spaces with scalar products}

Let $\mathbb{M}$ be a four-dimensional oriented vector space equipped with a scalar product $\eta$ of signature $(-,+,+,+)$. We fix an orthonormal basis $(v_A)$ $(A=0,1,2,3)$ of $\mathbb{M}$ such that the components $(\eta_{AB})$ of $\eta$ given by the basis form the matrix ${\rm diag}(-1,1,1,1)$. The matrix $(\eta_{AB})$ and its inverse $(\eta^{AB})$ will be used to, respectively, lower and raise capital Latin letter indices $A,B,C,D\in\{0,1,2,3\}$. 

Denote by $\mathbb{E}$ the subspace of $\mathbb{M}$ spanned by the vectors $\{v_1,v_2,v_3\}$. The scalar product $\eta$ induces on $\mathbb{E}$ a positive definite scalar product $\delta$---its components $(\delta_{IJ})$ in the basis $(v_1,v_2,v_3)$ form a matrix ${\rm diag}(1,1,1)$. The matrix $(\delta_{IJ})$ and its inverse $(\delta^{IJ})$ will be used to, respectively, lower and raise capital Latin letter indices $I,J,K,L,M\in\{1,2,3\}$.

\subsection{Phase space \label{Theta-sec}}

In this paper we will consider a particular phase space being a set of some fields on a three-dimensional oriented connected smooth\footnote{Throughout the paper ``smooth'' means ``of $C^\infty$ class''.} manifold $\Sigma$. A point in the phase space consists of: 
\begin{enumerate}
\item a quadruplet of smooth one-forms $(\theta^{A})\equiv\theta$ on $\Sigma$ such that\footnote{Conditions \ref{lin-cond} and \ref{q-cond} are not independent---in fact, the former is implied by the latter \cite{ham-nv}, but for further considerations it will be convenient to formulate them separately.} 
\begin{enumerate}
\item at each point {$y\in\Sigma$} {three of four one-forms $(\theta^A(y))$ are linearly independent},\label{lin-cond}
\item the metric 
\begin{equation}
q=\eta_{AB}\theta^A\ot\theta^B
\label{q}
\end{equation}
on $\Sigma$ is Riemannian (positive definite). \label{q-cond} 
\end{enumerate}
\item a quadruplet of smooth two-forms $(p_A)$ on $\Sigma$; $p_A$ is the momentum conjugate to $\theta^A$. 
\end{enumerate}         
Since now $\Theta$ will denote the space of all quadruplets $(\theta^A)$ satisfying the  {\em Conditions} above and $P$ will denote the space of all momenta $(p_A)$.  We will call the space $\Theta$ {\em (Hamiltonian) configuration space}. 

The phase space under consideration is then a Cartesian product $P\times \Theta$. As shown in \cite{oko-tegr} and \cite{os}  this is a phase space of both TEGR and a simple theory of the teleparallel geometry called Yang-Mills-type Teleparallel Model\footnote{In \cite{os} while describing the phase space of YMTM we imposed only the weaker and insufficient Condition \ref{lin-cond} and overlooked Condition \ref{q-cond}.} (YMTM) \cite{itin}. 

\subsection{Reduced configuration spaces}

As mentioned above we are going to construct quantum states for TEGR by means of the method described in \cite{q-nl}. Let us recall some notions used in that paper. 

Suppose that a set $\cal K$ of configurational elementary d.o.f. on $\Theta$ is chosen. Given finite set $K=\{\kappa_1,\ldots,\kappa_N\}\subset {\cal K}$ we say that $\theta\in \Theta$ is $K$-equivalent to $\theta'\in\Theta$, 
\[
\theta\sim_K \theta',
\]
if for every $\kappa_I\in K$
\[
\kappa_I(\theta)=\kappa_I(\theta').
\]      
The relation $\sim_K$ is an equivalence one and therefore it defines a quotient space
\[
\Theta_K:=\Theta/\sim_K.
\] 
We will denote by $[\theta]$ an equivalence class given by $\theta$. 

There exists a natural\footnote{The set $K$ is unordered, thus to define the map $\tilde{K}$ one has to order elements of $K$. Thus the map $\tilde{K}$ is natural modulo the ordering. However, every choice of the ordering is equally well suited for our purposes and nothing essential depends on the choice. Therefore we will neglect this subtlety throughout the paper.} injective map from {$\Theta_K$} into $\R^N$:  
\begin{equation}
\Theta_K\ni[\theta]\mapsto\tilde{K}([\theta]):=\Big(\kappa_1(\theta),\ldots,\kappa_N(\theta)\Big)\in \R^N.
\label{k-inj}
\end{equation}

We will say that the d.o.f. in $K$ are {\em independent} if the image of $\tilde{K}$ is an $N$-dimensional submanifold of $\R^N$. The set $\Theta_K$ given by a set $K$ of independent d.o.f. will be called a {\em reduced configuration space}. 

Let us note that the formulation of Assumption \ref{as-bij} in Section \ref{intro} lacks some precision since there we did not define what the ``natural bijection from {$\Theta_K$} onto $\R^N$'' is. Now we can formulate the assumption strictly:
\begin{enumerate}
\setcounter{enumi}{2}  
\item there exists a directed set elements of which are finite subsets of $\cal K$ such that for every element $K$ of the directed set the map $\tilde{K}$ given by \eqref{k-inj} is a bijection or, equivalently, 
\begin{equation}
\Theta_K\cong \R^N
\label{As}
\end{equation}
under $\tilde{K}$, where $N$ is the number of elements of $K$. 
\end{enumerate}
  
\section{Natural variables on $\Theta$}

\subsection{Configurational elementary d.o.f.}

Let us use the natural\footnote{The variables are natural in this sense that they are a result of the Legendre transformation \cite{nester,maluf,oko-tegr} applied to a Lagrangian formulation of TEGR in terms of cotetrad fields on a four-dimensional manifold.} variables $(\theta^A)$ on $\Theta$ to define configurational elementary d.o.f.. Since the variables are one-forms we follow the LQG methods (see \cite{rev,rev-1}) and define the following real function on $\Theta$:
\begin{equation}
\Theta\ni\theta \mapsto \kappa^A_e(\theta):=\int_e\theta^A\in\R,
\label{edf-0}
\end{equation}
 where $e$ is an {\em edge}\footnote{A {\em simple edge} is a one-dimensional connected $C^\infty$ submanifold of $\Sigma$ with two-point boundary. An edge is an {\em oriented} one-dimensional connected $C^0$ submanifold of $\Sigma$ given by a finite union of simple edges.} in $\Sigma$. Let 
\[
\bar{\cal K}:=\{\ \kappa^A_e \ \},
\]  
where $A=0,1,2,3$ and $e$ runs over a set of all edges in $\Sigma$. We choose $\bar{\cal K}$ to be a set of configurational elementary d.o.f. generated by the natural variables.

Now we have to check whether the set $\bar{\cal K}$ satisfies Assumptions listed in Section \ref{intro}. It is clear that functions in $\bar{\cal K}$ separate points of $\Theta$, thus $\bar{\cal K}$ meets Assumption \ref{as-sep}. The function $\kappa^A_e(\theta)$ can be easily expressed in terms of components of the one-form $\theta^A$ given by local coordinate frames on $\Sigma$. It follows immediately from such expressions that $\bar{\cal K}$ satisfies Assumption \ref{as-poly}.                

Regarding Assumption \ref{as-bij}, let us focus on sets of d.o.f. given by {\em graphs}\footnote{We say that two edges are {\em independent} if the set of their common points is either empty or consist of one or two endpoints of the edges. A {\em graph} in $\Sigma$ is a finite set of pairwise independent edges.} in $\Sigma$---it is known from LQG that under a technical requirement\footnote{One assumes $\Sigma$ to be a real-analytic manifold and restrict oneself to edges built from {\em analytic} simple edges.} all graphs in $\Sigma$ form a directed set. Consider then a graph $\gamma$ being a collection $\{e_1,\ldots,e_N\}$ of edges in $\Sigma$. The graph defines a finite set 
\[
K_\gamma:=\{\ \kappa^A_{e_1},\ldots,\kappa^A_{e_N}\ | \ A=0,1,2,3 \ \}
\]
of elementary d.o.f.. The set $(K_\gamma,\geq)$, where $\gamma$ runs over the directed set of graphs in $\Sigma$ and the relation $\geq$ is induced by the directing relation on the set of graphs, is a directed set.    

There holds the following lemma proven in Appendix \ref{theta-x-prf}:
\begin{lm}
Let $\gamma=\{e_1,\ldots,e_N\}$ be a graph. Then for every $(x^A_{j})\in\R^{4N}$ there exists $\theta\in\Theta$ such that
\[
\kappa^A_{e_{j}}(\theta)=x^A_{j}
\]      
for every $A=0,1,2,3$ and $j=1,2,\ldots,N$. 
\label{theta-x} 
\end{lm}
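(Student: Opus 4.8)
The plan is to construct $\theta\in\Theta$ locally near each edge and then patch the pieces together with a partition of unity, finally correcting the result so that Conditions~\ref{lin-cond} and \ref{q-cond} from Section~\ref{Theta-sec} hold. First I would handle a single edge $e$: near (a suitable parametrization of) $e$ one can choose a coordinate chart in which $e$ is a coordinate segment, and then it is elementary to write down a smooth one-form $\alpha^A$ supported in a small neighbourhood of $e$ with $\int_e\alpha^A = x^A$ prescribed — e.g. take $\alpha^A = x^A f(t)\,dt$ along the edge parameter $t$, multiplied by a bump function in the transverse directions, with $\int f = 1$. The independence of the edges in the graph $\gamma$ (common points only at endpoints) lets me shrink the supports so that the contribution of the bump for $e_i$ to the integral over $e_j$ vanishes for $i\neq j$; near shared endpoints one uses that an edge has measure-zero boundary, so a one-form that is, say, $O(t)$ near the endpoint integrates to something one can absorb, or more cleanly one arranges the bump functions to vanish in a neighbourhood of all vertices of $\gamma$ and accounts for the small defect afterwards. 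Summing, $\theta_0^A := \sum_{i=1}^N \alpha_i^A$ satisfies $\kappa^A_{e_j}(\theta_0)=x^A_j$ but need not lie in $\Theta$.

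The second step is to repair $\theta_0$ so that it becomes an element of $\Theta$ without changing the edge integrals. Here I would fix, once and for all, a reference quadruplet $\mathring\theta\in\Theta$ (such a $\theta$ exists: on an oriented $3$-manifold one can take $\mathring\theta^1,\mathring\theta^2,\mathring\theta^3$ to be a global coframe — e.g. coming from any Riemannian metric and a local-to-global argument, or simply declaring $\Theta\neq\emptyset$, which is implicit in the setup — and $\mathring\theta^0 = 0$). Choose a smooth function $\chi$ on $\Sigma$ that is supported away from every edge of $\gamma$ (possible since $\gamma$ is a finite union of one-dimensional submanifolds, hence its complement is open and dense) and is identically $1$ on a large region; then set
\[
\theta^A := \theta_0^A + \chi\,\mathring\theta^A.
\]
Since $\chi$ vanishes along every $e_j$, we still have $\kappa^A_{e_j}(\theta) = \kappa^A_{e_j}(\theta_0) = x^A_j$. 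On the region where $\chi\equiv 1$ and $\theta_0\equiv 0$ (which one can arrange to be a fixed neighbourhood of the complement of the tubular neighbourhoods of the edges), $\theta$ coincides with $\mathring\theta$ and so Conditions~\ref{lin-cond}--\ref{q-cond} hold there automatically.

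The remaining issue — and this is the real obstacle — is ensuring Conditions~\ref{lin-cond} and \ref{q-cond} on the thin tubular neighbourhoods of the edges, where $\theta$ is a sum of $\mathring\theta$-type terms and the bump one-forms $\alpha_i^A$. The point is that both conditions are \emph{open} conditions on the $1$-jet (pointwise) of $\theta$, and they hold for $\mathring\theta$; so it suffices to make the perturbation $\theta_0$ small in $C^0$. But $\alpha_i^A=x_i^A f(t)\,dt\cdot(\text{transverse bump})$ with $\int f=1$ can be made arbitrarily small in sup-norm by spreading $f$ along a long reparametrized stretch of the edge while keeping its integral equal to $1$ — i.e. replace $f$ by $\tfrac1L f(t/L)$ over a length-$L$ piece. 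Equivalently, subdivide each edge into many sub-edges; since $\kappa$ is additive over subdivision, distributing $x_i^A$ over $m$ sub-edges lets each local bump carry mass $x_i^A/m$, which $\to 0$. Hence for $m$ large enough the total perturbation has $C^0$-norm below the threshold that keeps $q = \eta_{AB}\theta^A\otimes\theta^B$ positive definite and keeps three of the $\theta^A$ linearly independent pointwise, so $\theta\in\Theta$, completing the proof. I expect the bulk of the write-up to be the careful bookkeeping near shared vertices and the explicit estimate showing the open conditions survive the small perturbation; everything else is routine partition-of-unity manipulation.
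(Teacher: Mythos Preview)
Your argument has a genuine gap in the final step. You claim that the bump one-forms $\alpha_i^A$ can be made arbitrarily small in $C^0$ by ``spreading $f$ along a long reparametrised stretch of the edge'' or by subdividing the edge into $m$ pieces. This is false: a one-form is a geometric object independent of parametrisation, and if $\alpha$ is a one-form with $\int_e\alpha=c$ over an edge $e$ of length $L$ (with respect to any fixed auxiliary metric), then $\|\alpha\|_{C^0}\ge |c|/L$. Subdividing $e$ into $m$ sub-edges of length $L/m$ and putting a bump of mass $c/m$ on each gives bumps whose sup-norm is still bounded below by $(|c|/m)/(L/m)=|c|/L$ --- you shrink the mass and the support by the same factor, so nothing is gained. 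In your one-dimensional model, $\alpha^A=x^A f(t)\,dt$ with $\int_0^1 f=1$ forces $\sup|f|\ge 1$; you cannot ``stretch'' $[0,1]$ into $[0,L]$ because the edge is a fixed subset of $\Sigma$. Since the targets $x^0_j$ are arbitrary real numbers, no choice of bumps will keep $q=\eta_{AB}\theta^A\otimes\theta^B$ positive definite in general.

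There is a second, related inconsistency: with $\chi$ supported away from the edges as you describe, $\theta$ coincides with $\theta_0$ on (a neighbourhood of) each edge, and your $\theta_0^A$ are all scalar multiples of the single covector $dt$ there, so no three of them are linearly independent --- Condition~\ref{lin-cond} fails outright on the edges before any smallness estimate enters. Dropping $\chi$ and writing $\theta=\mathring\theta+\theta_0$ fixes this inconsistency but then runs straight into the sup-norm obstruction above.

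The paper circumvents this by a genuinely different mechanism. It first proves (Lemma~\ref{theta-x3}) that any real numbers can be realised as $\int_{e_j}\theta^I$ for a global \emph{coframe} $(\theta^I)$, and then exploits the characterisation of $\Theta$ from Lemma~\ref{al-th}: $\theta^0=\alpha_I\theta^I$ with $\alpha_I\alpha^I<1$. The key freedom is that $\theta^1$ can be made to have large integrals of opposite sign on two sub-edges of each $e_j$ (so the contributions to $\int_{e_j}\theta^1$ cancel); a function $\alpha_1$ with $|\alpha_1|<1$ and matching sign changes then makes $\int_{e_j}\alpha_1\theta^1$ equal to the prescribed $x^0_j$. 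In other words, rather than making the perturbation small relative to a fixed reference, the paper makes the \emph{reference coframe} large on sub-edges in a way that is invisible in the total edge integrals. This sub-edge cancellation is the missing idea in your approach.
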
   
    
Let us now comment on the lemma. Recall now that Condition \ref{q-cond} of the phase space description presented in Section \ref{Theta-sec} means that for every $\theta\in\Theta$ and for every nonzero vector $X$ tangent to $\Sigma$ the values $(\theta^A(X))$ form a space-like vector in $\mathbb{M}\cong\R^4$. On the other hand, given edge $e$ and $\theta\in\Theta$, we can interpret a quadruplet $(\kappa^A_e(\theta))$ as a vector in $\mathbb{M}$. Naively thinking, one could expect that $(\kappa^A_e(\theta))$ should be space-like also. However, a sum---and then an integral---of space-like vectors in $\mathbb{M}$ may be any other vector in $\mathbb{M}$ and this is exactly why the lemma is true.    

Note that Lemma \ref{theta-x} implies that for every graph $\gamma$ in $\Sigma$ the map $\tilde{K}_\gamma$ (see \eqref{k-inj}) is a bijection or, equivalently, the reduced configuration space 
\[
\Theta_{K_\gamma}\cong\R^{4N},
\]
where $N$ is the number of edges of $\gamma$. Consequently, the set $\bar{\cal K}$ with the directed set $(K_\gamma,\geq)$ satisfies Assumption \ref{as-bij}. 

It is clear that the d.o.f. in $\bar{\cal K}$ are defined in a background independent manner. Note that there exists on $\bar{\cal K}$ a natural action of diffeomorphisms on $\Sigma$: given diffeomorphism $\varphi$ on $\Sigma$, a d.o.f. $\kappa^A_e\in\bar{\cal K}$ is mapped by the diffeomorphism to $\varphi^*\kappa^A_e$ being a function on $\Theta$ such that
\[
(\varphi^*\kappa^A_e)(\theta):=\int_e\varphi^*\theta^A=\int_{\varphi(e)}\theta^A=\kappa^A_{\varphi(e)}(\theta).
\]     
This means that $\bar{\cal K}$ is preserved by the action of the diffeomorphisms. 

We conclude that the set $\bar{\cal K}$ of configurational d.o.f. defined by the natural variables $(\theta^A)$ satisfies all Assumptions presented in Section \ref{intro}. Thus the set $\bar{\cal K}$ seems to be suitable for constructing in a background independent way a set of quantum states for TEGR. In fact, the directed set $(K_\gamma,\geq)$ can be extended to a proper directed set $({\Lambda},\geq)$ for TEGR---the construction of the latter set is fully analogous to the construction of a set $(\Lambda,\geq)$ for DPG \cite{q-nl}. Since the resulting set $({\Lambda},\geq)$ for TEGR is proper it generates {\em a space of kinematic quantum states} for TEGR which will be denoted by $\bar{\D}$.     

\subsection{An undesired property $\bar{\D}$ }

Unfortunately, the space $\bar{\D}$ of kinematic quantum states for TEGR seems to be too large to be used in a canonical quantization of TEGR. The space is ``too large'' in the following sense.

Let us denote by $\Theta'$  the set of all quadruplets $(\theta^A)$ of one-forms on $\Sigma$ which satisfy Condition \ref{lin-cond} of the phase space description (see Section \ref{Theta-sec}). Obviously, $\Theta\subset\Theta'$ and consequently Lemma \ref{theta-x} is true in the case of $\Theta'$. Defining the space $\Theta'_{K_\gamma}$ analogously to $\Theta_{K_\gamma}$ and $\D'_{{\lambda}}$ analogously to $\D_{{\lambda}}$ we see immediately that
\begin{align}
\Theta'_{K_\gamma}&\cong\Theta_{K_\gamma},& \D'_{{\lambda}}&\cong\D_{{\lambda}}.
\label{T'=T}
\end{align}
Thus the space $\bar{\D}$ meant to correspond to $\Theta$ corresponds actually to the larger space $\Theta'$.

Note that the space $\Theta'$ contains quadruplets $(\theta^A)$ which via the formula \eqref{q} define on $\Sigma$ not only Riemannian metrics but also metrics which (locally or globally) are Lorentzian (i.e. of signature $(-,+,+)$). Thus the kinematic quantum states in $\bar{\D}$ correspond also to a large set of quadruplets $(\theta^A)$ which have nothing to do with elements of $\Theta$---note that it is rather not possible for a quadruplet defining a Lorentzian metric to be a limit of a sequence of elements of $\Theta$. 

Is it possible to isolate quantum states in $\bar{\D}$ which do not correspond to Loren\-tzian metrics on $\Sigma$? Perhaps it is, but we expect this to be rather difficult because of the following reason. By means of d.o.f. belonging to a finite subset $K_\gamma$ of $\bar{\cal K}$ we are not able to distinguish between elements of $\Theta$ and those of $\Theta'\setminus\Theta$---see the first equation \eqref{T'=T}. On the other hand, all d.o.f. in $\bar{\cal K}$ separate points not only in $\Theta$ but also in $\Theta'$. Thus the all d.o.f. in $\bar{\cal K}$ distinguish between elements of $\Theta$ and $\Theta'\setminus\Theta$. Consequently, we are not able to isolate quantum states which do not correspond to Loren\-tzian metrics by means of a family $\{R_\lambda\}_{\lambda\in\Lambda}$ of restrictions such that each restriction $R_\lambda$ is imposed on elements of $\D_{{\lambda}}$ but would have to isolate desired states at the level of the whole $\bar{\D}$. Taking into account the complexity of $\bar{\D}$, this task seems to be very difficult. Therefore we prefer to find other variables which could give us a space of quantum states free of the undesired property of $\bar{\D}$. 
  
\section{New variables on $\Theta$}

\subsection{New variables---preliminary considerations}

The undesired property of $\bar{\D}$ just described follows from the fact that the variables $(\theta^A)$ can be used to parameterize not only the configuration space $\Theta$ but also the larger space $\Theta'$ (provided Condition \ref{q-cond} has been omitted). Thus to obtain a space of kinematic quantum states for TEGR free of the property of $\bar{\D}$ we can try to find new variables which parameterize the space $\Theta$ and cannot be used to describe those elements of $\Theta'\setminus\Theta$ which correspond to Lorentzian metrics on $\Sigma$. Below we present some preliminary considerations results of which will be used in the next subsection to define such new variables. 
 
Condition \ref{lin-cond} of the phase space description together with continuity of the fields mean that three of four one-forms $(\theta^A)$ define a local coframe on $\Sigma$ and consequently the remaining one-form can be expressed as a linear combination of the three ones. It turns out that Condition \ref{q-cond} allows to formulate a stronger statement:
\begin{lm}
A quadruplet $(\theta^A)$ belongs to $\Theta$ if and only if for every point $y\in\Sigma$   
\begin{enumerate}
\item the forms $(\theta^1(y),\theta^2(y),\theta^3(y))$ are linearly independent, 
\item 
\begin{equation}
\theta^0(y)=\alpha_I(y)\theta^I(y)
\label{th0-ath}
\end{equation}
where $\alpha_I(y)$ are real numbers satisfying 
\begin{equation}
\alpha_I(y)\alpha^I(y)<1.
\label{aa<1}
\end{equation}
\end{enumerate}   
\label{al-th}
\end{lm}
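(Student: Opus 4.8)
The plan is to work pointwise: fix $y\in\Sigma$ and analyze the condition that $q(y)=\eta_{AB}\theta^A(y)\ot\theta^B(y)$ is positive definite on $T_y\Sigma$, given that three of the four covectors $\theta^A(y)$ are linearly independent. First I would show that Condition~\ref{q-cond} forces $(\theta^1(y),\theta^2(y),\theta^3(y))$ to be the linearly independent triple (rather than some other choice of three). Indeed, if instead $\theta^0(y)$ belonged to a linearly independent triple, say $(\theta^0,\theta^1,\theta^2)$ with $\theta^3 = c_0\theta^0 + c_1\theta^1 + c_2\theta^2$, I would exhibit a tangent vector $X$ on which $q(y)$ fails to be positive. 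Concretely, using the dual basis to $(\theta^0(y),\theta^1(y),\theta^2(y))$ one can pick $X$ annihilating $\theta^1(y)$ and $\theta^2(y)$; then $\theta^3(X)=c_0\,\theta^0(X)$ and $q(y)(X,X)=-(\theta^0(X))^2+(c_0\theta^0(X))^2=(c_0^2-1)(\theta^0(X))^2$, which is $\le 0$ when $|c_0|\le 1$. One then has to rule out $|c_0|>1$ as well; this requires testing a second vector or invoking the full $3\times 3$ Gram matrix computed below, so a slightly more careful argument (e.g. diagonalizing the restriction of $q(y)$ to $\ker\theta^1(y)\cap\ker\theta^2(y)$, a $1$-dimensional space) closes this case. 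The upshot: necessarily $(\theta^1(y),\theta^2(y),\theta^3(y))$ are independent and $\theta^0(y)=\alpha_I(y)\theta^I(y)$ for some real $\alpha_I(y)$.

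Next I would substitute \eqref{th0-ath} into \eqref{q} and compute the Gram matrix of $q(y)$ in the coframe $(\theta^I(y))$. One gets
\begin{equation}
q(y)=\eta_{AB}\theta^A\ot\theta^B = -\theta^0\ot\theta^0+\delta_{IJ}\theta^I\ot\theta^J = \big(\delta_{IJ}-\alpha_I\alpha_J\big)\,\theta^I\ot\theta^J,
\label{gram-new}
\end{equation}
so in the basis dual to $(\theta^I(y))$ the metric is represented by the symmetric matrix $Q_{IJ}:=\delta_{IJ}-\alpha_I\alpha_J$ (all quantities at $y$). Since $(\theta^I(y))$ is a genuine coframe, $q(y)$ is positive definite if and only if $Q=\id - \alpha\alpha^{\mathsf T}$ is positive definite, where $\alpha=(\alpha_1,\alpha_2,\alpha_3)^{\mathsf T}$ and I identify $\alpha^I=\delta^{IJ}\alpha_J$ so $\alpha_I\alpha^I=|\alpha|^2$. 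This is a rank-one perturbation of the identity; its eigenvalues are $1$ (with multiplicity $2$, on $\alpha^\perp$) and $1-|\alpha|^2$ (on the line spanned by $\alpha$). Hence $Q$ is positive definite $\iff 1-|\alpha|^2>0 \iff \alpha_I\alpha^I<1$, which is exactly \eqref{aa<1}. This establishes both directions of the equivalence at the point $y$: if $(\theta^A)\in\Theta$ then the two listed conditions hold at every $y$; conversely, if at every $y$ the triple $(\theta^I(y))$ is independent and \eqref{th0-ath}--\eqref{aa<1} hold, then $(\theta^I(y))$ being independent gives Condition~\ref{lin-cond} automatically (three of the four forms are independent), and \eqref{gram-new} together with positivity of $Q$ gives Condition~\ref{q-cond}.

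Finally I would address smoothness/consistency: the coefficients $\alpha_I(y)$ are uniquely determined by \eqref{th0-ath} wherever $(\theta^I(y))$ is a coframe, and since $(\theta^A)$ are smooth and the triple stays independent on a neighborhood of each $y$, Cramer's rule shows $y\mapsto\alpha_I(y)$ is smooth there; this is needed only to make sense of the statement, not for the logical equivalence, and is a routine remark. The main obstacle is the first step --- proving that Condition~\ref{q-cond} singles out precisely the triple $(\theta^1,\theta^2,\theta^3)$ as the independent one and not merely \emph{some} independent triple containing $\theta^0$; everything after that is the clean linear-algebra computation \eqref{gram-new} plus the eigenvalue analysis of $\id-\alpha\alpha^{\mathsf T}$. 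I expect the cleanest route to that first step is exactly to run the Gram-matrix argument for an arbitrary choice of three independent forms and observe that positive definiteness of the $3\times3$ block forces the ``timelike'' index $0$ to be the dependent one, because any $3\times3$ submatrix of $\eta$ that includes the $00$ entry has determinant/signature incompatible with positive definiteness once the fourth form is a bounded-coefficient combination --- but I would present it in the concrete dual-vector form above to keep it elementary.
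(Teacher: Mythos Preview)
Your approach is essentially the paper's: the second step (the Gram matrix $Q_{IJ}=\delta_{IJ}-\alpha_I\alpha_J$ and its eigenvalues $1,1,1-|\alpha|^2$) is identical to the paper's Step~2. The only wrinkle is in your first step, where you misidentify what must be shown. You frame it as ``rule out the case that $(\theta^0,\theta^1,\theta^2)$ is the independent triple,'' and after computing $q(X,X)=(c_0^2-1)(\theta^0(X))^2$ you say you still need to ``rule out $|c_0|>1$.'' But you do not: the goal is not to exclude the possibility that $(\theta^0,\theta^1,\theta^2)$ is independent---it may well be---but merely to conclude that $(\theta^1,\theta^2,\theta^3)$ is \emph{also} independent. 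Your own computation already delivers this: positive definiteness forces $c_0^2>1$, hence $c_0\neq 0$, and then from $\theta^3=c_0\theta^0+c_1\theta^1+c_2\theta^2$ with $c_0\neq 0$ and $(\theta^0,\theta^1,\theta^2)$ independent one solves for $\theta^0$ and sees that $(\theta^1,\theta^2,\theta^3)$ spans the same three-dimensional space. This is exactly how the paper closes its Step~1. So there is no second vector to test and no further case to eliminate; your argument is already complete once read correctly.
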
 
  
\begin{proof}
Let us fix a point $y\in\Sigma$. For the sake of simplicity till the end of this proof we will omit the symbol ``$y$'' in the notation i.e. we will denote $\theta^A(y)$ by $\theta^A$, $\alpha_I(y)$ by $\alpha_I$ and $q(y)$ by $q$. As before we will refer to the two conditions imposed on the elements of $\Theta$ in Section \ref{Theta-sec} as to, respectively, Condition \ref{lin-cond} and Condition \ref{q-cond} and to the two assertions of the lemma as, respectively, Assertion 1 and Assertion 2.       

\paragraph{Step 1: Conditions \ref{lin-cond} and \ref{q-cond} imply Assertion 1} Condition \ref{lin-cond} means that either $(i)$ $(\theta^1,\theta^2,\theta^3)$  or $(ii)$ $(\theta^0,\theta^I,\theta^J)$, $I\neq J$, are linearly independent. Let us show that $(i)$ is true even if $(ii)$ holds. Without loss of generality we assume that $(\theta^0,\theta^1,\theta^2)$ are linearly independent. Then for some real numbers $a,b,c$ 
\[
\theta^3=a\theta^0+b\theta^1+c\theta^2
\]        
and
\[
q=-\theta^0\ot\theta^0+\theta^1\ot\theta^1+\theta^2\ot\theta^2+(a\theta^0+b\theta^1+c\theta^2)\ot(a\theta^0+b\theta^1+c\theta^2).
\]

Let $Y$ be a vector belonging to  $T_y\Sigma$ such that $\theta^0(Y)=1$ and $\theta^1(Y)=\theta^2(Y)=0$. Because of Condition \ref{q-cond} the number $q(Y,Y)$ must be positive:
\[
q(Y,Y)=-1+a^2>0
\]    
which means that
\begin{equation}
a^2>1.
\label{a^2}
\end{equation}
Now by virtue of \eqref{a^2} and the following equations 
\[
\begin{matrix}
\theta^3&=&a\theta^0&+&b\theta^1&+&c\theta^2\\
\theta^1&=&         & &\theta^1 & &\\
\theta^2&=&         & &         & & \theta^2
\end{matrix}
\]
the forms $(\theta^1,\theta^2,\theta^3)$ are linearly independent. 

\paragraph{Step 2: Condition \ref{q-cond} and Assertion 1 are equivalent to Assertions 1 and 2} If Assertion 1 is true then there exists real numbers $(\alpha_I)$, $I=1,2,3$, such that
\[
\theta^0=\alpha_I\theta^I.
\] 
Consequently
\begin{equation}
q=(-\alpha_I\alpha_J+\delta_{IJ})\theta^I\ot\theta^J.
\label{q-th^I}
\end{equation}
The metric $q$ is positive definite if and only if the eigenvalues of the matrix
\begin{equation}
(q_{IJ}):=(-\alpha_I\alpha_J+\delta_{IJ})
\label{qIJ}
\end{equation}
are positive. Of course, if all the $(\alpha_I)$ are zero then the eigenvalues are positive. Assume then that 
\begin{equation}
\alpha_I\alpha^I>0.
\label{aa>0}
\end{equation}
Then the eigenvectors of the matrix $(q_{IJ})$ are $(\alpha_I)$ and $(\beta_I),(\gamma_I)$, where the latter two vectors satisfy
\[
\sum_{I}\beta_I\alpha_I=\sum_I\gamma_I\alpha_I=\sum_{I}\beta_I\gamma_I=0.
\]   
Indeed,
\[
\sum_Jq_{IJ}\alpha_J=\alpha_I(-\sum_J\alpha_J\alpha_J+1)=(1-\alpha_J\alpha^J)\alpha_I
\]
and
\[
\sum_Jq_{IJ}\beta_J=\beta_I, \ \ \ \sum_Jq_{IJ}\gamma_J=\gamma_I. 
\]
These results mean that the eigenvalues of $(q_{IJ})$ are $1$, $1$  and 
\begin{equation}
1-\alpha_I\alpha^I.
\label{1-aa}
\end{equation}

The conclusion is that $q$ is positive definite if and only if $(i)$ all the $\{\alpha_I\}$ are zero or $(ii)$  $1-\alpha_I\alpha^I>0$ if \eqref{aa>0} holds. Obviously, the alternative of the conditions $(i)$ and $(ii)$ can be equivalently expressed as the following one condition
\[
1-\alpha_I\alpha^I>0.
\]
Thus we showed that Condition \ref{q-cond} and Assertion 1 are equivalent to Assertions 1 and 2. 

\paragraph{Step 3: final conclusion} Clearly, Assertion 1 implies Condition \ref{lin-cond}. This fact together with the result of Step 1 ensure that Conditions \ref{lin-cond} and \ref{q-cond} are equivalent to Condition \ref{q-cond} and Assertion 1. Now to finish the proof it is enough to take into account the result of Step 2.  
\end{proof}

\begin{cor}
If $(\theta^A)\in \Theta$ then the triplet $(\theta^1,\theta^2,\theta^3)$ is a global coframe on $\Sigma$.   
\end{cor}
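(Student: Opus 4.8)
The plan is to deduce this immediately from Lemma \ref{al-th}. Recall that a \emph{global coframe} on the three-dimensional manifold $\Sigma$ is an ordered triple of smooth one-forms, defined on all of $\Sigma$, whose values at each point $y\in\Sigma$ form a basis of the cotangent space $T^*_y\Sigma$; since $\dim\Sigma=3$, this amounts to requiring the three one-forms to be globally defined and linearly independent at every point.

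First I would observe that, by the definition of the configuration space $\Theta$ in Section \ref{Theta-sec}, the one-forms $\theta^1,\theta^2,\theta^3$ are smooth and defined on the whole of $\Sigma$, so the only thing left to verify is pointwise linear independence. But that is precisely the first assertion of Lemma \ref{al-th}: for $(\theta^A)\in\Theta$ and every $y\in\Sigma$, the covectors $\theta^1(y),\theta^2(y),\theta^3(y)$ are linearly independent. Invoking the elementary fact that $n$ linearly independent vectors in an $n$-dimensional vector space form a basis, applied fiberwise to $T^*\Sigma$, one concludes that $(\theta^1(y),\theta^2(y),\theta^3(y))$ is a basis of $T^*_y\Sigma$ for every $y$, hence $(\theta^1,\theta^2,\theta^3)$ is a global coframe.

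I do not expect any genuine obstacle here: the corollary is just Lemma \ref{al-th} repackaged in the standard language of (co)frames, the only substantive input being the linear-independence statement already established (in particular in Step 1 of the proof of Lemma \ref{al-th}). If one additionally wanted the coframe to be positively oriented with respect to the orientation of $\Sigma$, one could reorder the triple, but the statement as given requires nothing beyond pointwise linear independence.
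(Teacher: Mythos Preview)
Your proof is correct and follows essentially the same approach as the paper, which simply notes that the corollary follows immediately from Assertion 1 of Lemma \ref{al-th}. Your version just makes explicit the implicit ingredients (smoothness from the definition of $\Theta$, and that three linearly independent covectors in a three-dimensional cotangent space form a basis).
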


\begin{proof}
The corollary follows immediately from Assertion 1 of Lemma \ref{al-th}.
\end{proof}

A consequence of the corollary is that the space $\Theta$ splits into two disjoint subspaces: 
\begin{align*}
&\Theta=\Theta_+\cup\Theta_-, & &\Theta_+\cap\Theta_-=\varnothing, 
\end{align*}
where $\Theta_+$ is constituted by quadruplets $(\theta^0,\theta^I)$ such that the coframe $(\theta^I)$ is compatible with the fixed orientation of $\Sigma$ and $\Theta_-$ consists of quadruplets such that $(\theta^I)$ defines the opposite orientation on the manifold.     

Let us finally reformulate Lemma \ref{al-th} in the following way:

\begin{lm}
There exists a one-to-one correspondence between elements of $\Theta$ and all pairs $(\alpha_I,\theta^J)$ consisting of
\begin{enumerate}
\item real functions $\alpha_I$, $I=1,2,3$,  on $\Sigma$ such that 
\begin{equation}
\alpha_I\alpha^I<1,
\label{aa<1'}
\end{equation}
\item one-forms  $\theta^J$, $J=1,2,3$, on $\Sigma$  constituting a global coframe on the manifold. 
\end{enumerate}  
The correspondence is given by 
\begin{equation}
(\alpha_I,\theta^J)\mapsto (\theta^0=\alpha_I\theta^I,\theta^J)\in\Theta.
\label{alth->}
\end{equation}
\label{th-bij-0}
\end{lm}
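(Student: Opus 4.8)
The plan is to derive this directly from Lemma \ref{al-th}, which already does essentially all the work; the only thing left is to repackage the pointwise statement as a global bijection. First I would define the candidate map. Given a pair $(\alpha_I,\theta^J)$ as in the statement, set $\theta^0:=\alpha_I\theta^I$ and consider the quadruplet $(\theta^0,\theta^1,\theta^2,\theta^3)$. I need to check this quadruplet lies in $\Theta$: by hypothesis the $(\theta^J)$ form a global coframe, so at every $y\in\Sigma$ the forms $(\theta^1(y),\theta^2(y),\theta^3(y))$ are linearly independent, and by \eqref{aa<1'} the numbers $\alpha_I(y)$ satisfy $\alpha_I(y)\alpha^I(y)<1$; hence Lemma \ref{al-th} (read in the ``if'' direction) guarantees $(\theta^A)\in\Theta$. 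Smoothness of $\theta^0$ follows from smoothness of the $\alpha_I$ and the $\theta^J$. So the map \eqref{alth->} is well defined into $\Theta$.

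Next I would exhibit the inverse. Take any $(\theta^A)\in\Theta$. By Lemma \ref{al-th} (now in the ``only if'' direction) the forms $(\theta^1(y),\theta^2(y),\theta^3(y))$ are linearly independent at every $y$ — so $(\theta^1,\theta^2,\theta^3)$ is a global coframe on $\Sigma$ — and there exist real numbers $\alpha_I(y)$ with $\theta^0(y)=\alpha_I(y)\theta^I(y)$ and $\alpha_I(y)\alpha^I(y)<1$. Since $(\theta^J(y))$ is a basis of $T_y^*\Sigma$, the coefficients $\alpha_I(y)$ are uniquely determined by $\theta^0(y)$; defining $\alpha_I:\Sigma\to\R$ this way and noting $\alpha_I=\theta^0(E_I)$ where $(E_I)$ is the frame dual to $(\theta^J)$ — itself smooth — shows $\alpha_I$ is smooth. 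This assignment $(\theta^A)\mapsto(\alpha_I,\theta^J)$ lands in the set of admissible pairs (the inequality \eqref{aa<1'} holds at every point) and is manifestly a two-sided inverse of \eqref{alth->}: composing one way returns $\theta^0=\alpha_I\theta^I$ and leaves $\theta^J$ fixed, composing the other way recovers the unique coefficients. Hence \eqref{alth->} is a bijection, which is the claim.

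I do not expect any genuine obstacle here: the substantive content — that positive-definiteness of $q$ is equivalent to the algebraic condition $\alpha_I\alpha^I<1$ together with linear independence of $(\theta^1,\theta^2,\theta^3)$ — is exactly Lemma \ref{al-th}, and the Corollary already upgrades ``linearly independent at each point'' to ``global coframe''. The only points requiring a word of care are (i) checking that the $\alpha_I$ obtained from a given $\theta\in\Theta$ depend smoothly on $y$ (handled by writing them as contractions of $\theta^0$ with the dual frame), and (ii) being explicit that uniqueness of the expansion coefficients in a basis is what makes the two maps mutually inverse rather than merely surjective/injective. Everything else is bookkeeping.
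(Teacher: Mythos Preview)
Your proposal is correct and follows exactly the route the paper intends: Lemma~\ref{th-bij-0} is presented there merely as a ``reformulation'' of Lemma~\ref{al-th} (with no separate proof), and your argument spells out precisely this reformulation---well-definedness of \eqref{alth->} from the ``if'' direction, the inverse from the ``only if'' direction together with the Corollary. If anything you are more careful than the paper, since you explicitly verify smoothness of the $\alpha_I$ via the dual frame and note the uniqueness of the expansion coefficients; both points are left implicit in the paper.
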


Note that a collection $(\alpha_I)$ can be treated as a function on $\Sigma$ valued in a unit open ball
\[
\mathbb{B}:=\{\ (a,b,c)\in \R^3 \ | \ a^2+b^2+c^2<1 \ \}.
\]   

Lemma \ref{th-bij-0} guarantees that the space $\Theta$ can be parameterized by global coframes on $\Sigma$ and functions $(\alpha_I)$ on the manifold valued in the ball $\mathbb{B}$. Let us now use these variables to define elementary d.o.f..

Since $(\alpha_I)$ are real functions on $\Sigma$, that is, zero-forms it is natural to use a point $y\in\Sigma$ to define a map 
\begin{equation}
\Theta\ni\theta\mapsto\kappa^{\prime I}_y(\theta):=\alpha^I(y)\in\R.
\label{al-dof}
\end{equation}
On the other hand elementary d.o.f. corresponding to the global coframes can be chosen as before, i.e.,
\begin{equation}
\Theta\ni\theta\mapsto \kappa^I_e(\theta)=\int_e \theta^I\in\R.
\label{th-dof}
\end{equation}
Let 
\[
\bar{\cal K}':=\{\ \kappa^{\prime I}_y, \kappa^J_e \ \},
\]  
where $I,J=1,2,3$, $y$ runs over $\Sigma$  and $e$ over the set of all edges in the manifold. We choose $\bar{\cal K}'$ to be a set of configurational elementary d.o.f. generated by the variables $(\alpha_I,\theta^J)$.

Let us check whether the set $\bar{\cal K}'$ satisfies all Assumptions presented in Section \ref{intro}. It obviously meets Assumption \ref{as-sep}. Note that the r.h.s. of \eqref{al-dof} can be treated as an integral of the function $\alpha^I$ over the set $\{y\}\subset\Sigma$ and, consequently, $\bar{\cal K}'$ satisfies Assumption \ref{as-poly}.    

Regarding Assumption \ref{as-bij}, consider a finite set $u=\{y_1,\ldots,y_M\}$ of points in $\Sigma$ and a graph $\gamma=\{e_1,\ldots,e_N\}$ in the manifold and define a finite set of d.o.f.
\begin{equation}
K'_{u,\gamma}:=\{\ \kappa^{\prime I}_{y_1},\ldots,\kappa^{\prime I}_{y_M},\kappa^J_{e_1},\ldots,\kappa^J_{e_N}\ | \ I,J=1,2,3 \ \}.
\label{K-ug}
\end{equation}
Note that a collection of all such sets is a directed set: we say that $K'_{u',\gamma'}$ is greater than $K'_{u,\gamma}$, 
\[
K'_{u',\gamma'}\geq K'_{u,\gamma},
\]
if $u'\supset u$ and $\gamma'\geq\gamma$. 

Now we have to find the image of the map $\tilde{K'}_{u,\gamma}$ (see \eqref{k-inj}). It is obvious that there holds the following lemma
\begin{lm}
Let $u=\{y_1,\ldots,y_M\}$ be a finite collection of points in $\Sigma$. Then for every $(z^I_{j})\in \mathbb{B}^{M}$ there exist real functions $(\alpha_I)$ satisfying the condition described in Lemma \ref{th-bij-0} such that
\[
\alpha^I(y_{j})=z^I_{j}
\]      
for every $I=1,2,3$ and $j=1,2,\ldots,M$. 
\label{al-b}
\end{lm}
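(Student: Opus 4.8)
The plan is to construct the required functions $(\alpha_I)$ explicitly by a simple interpolation/bump-function argument. The constraint is twofold: we need smooth real functions $\alpha_I$ on $\Sigma$ satisfying the pointwise bound $\alpha_I\alpha^I<1$ everywhere (equivalently, $(\alpha_1(y),\alpha_2(y),\alpha_3(y))\in\mathbb{B}$ for every $y\in\Sigma$), and we need them to take prescribed values $z^I_j\in\mathbb{B}$ at the finitely many points $y_j$. The two obstructions are: (i) the $y_j$ are distinct but otherwise arbitrary points, so we must disentangle their prescribed values; and (ii) the global bound $\alpha_I\alpha^I<1$ must not be violated anywhere, including away from the $y_j$.

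First I would choose, using that $\Sigma$ is a smooth manifold and the points $y_1,\dots,y_M$ are distinct, pairwise disjoint coordinate balls $U_j\ni y_j$ and smooth bump functions $\chi_j\colon\Sigma\to[0,1]$ with $\supp\chi_j\subset U_j$ and $\chi_j\equiv 1$ on a neighbourhood of $y_j$; since the supports are disjoint, for each fixed $y\in\Sigma$ at most one $\chi_j(y)$ is nonzero. Then I would set
\[
\alpha_I(y):=\sum_{j=1}^{M}\chi_j(y)\,z^I_j,\qquad I=1,2,3.
\]
These are manifestly smooth, and by construction $\alpha^I(y_j)=z^I_j$ for every $j$ (here I use $\chi_k(y_j)=0$ for $k\neq j$, which holds once the $U_k$ are chosen small enough to be pairwise disjoint — possible since the $y_j$ are distinct). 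It remains only to verify the bound. Fix $y\in\Sigma$. If $\chi_j(y)=0$ for all $j$ then $\alpha_I(y)=0$ and $\alpha_I\alpha^I(y)=0<1$. Otherwise exactly one index, say $j_0$, has $\chi_{j_0}(y)\neq 0$, and then $(\alpha_1(y),\alpha_2(y),\alpha_3(y))=\chi_{j_0}(y)\,(z^1_{j_0},z^2_{j_0},z^3_{j_0})$, so
\[
\alpha_I\alpha^I(y)=\chi_{j_0}(y)^2\,\big((z^1_{j_0})^2+(z^2_{j_0})^2+(z^3_{j_0})^2\big)\le (z^1_{j_0})^2+(z^2_{j_0})^2+(z^3_{j_0})^2<1,
\]
using $0\le\chi_{j_0}(y)\le 1$ and $z_{j_0}\in\mathbb{B}$. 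Hence $\alpha_I\alpha^I<1$ everywhere, so the $(\alpha_I)$ satisfy the condition of Lemma \ref{th-bij-0}, which is exactly what is claimed.

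I do not expect any real obstacle here — this is the "obvious" lemma the authors already flag as such. The only mild point of care is making the bump supports pairwise disjoint, which is where distinctness of the $y_j$ enters; and using the \emph{scaled} combination $\sum\chi_j z^I_j$ with $0\le\chi_j\le1$ rather than, say, an unconstrained partition-of-unity-style sum, so that each convex-type rescaling keeps us inside the ball $\mathbb{B}$. (One could equally take any smooth functions interpolating the values $z^I_j$ and then compose with a smooth retraction of $\R^3$ onto a slightly smaller closed ball inside $\mathbb{B}$, but the disjoint-bump construction avoids even that and gives the cleanest proof.)
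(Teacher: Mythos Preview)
Your proof is correct. The paper itself does not supply a proof of this lemma at all---it simply states that ``it is obvious that there holds the following lemma''---so your disjoint-bump construction is exactly the kind of routine verification the authors had in mind.
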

\noindent The next lemma is proven in Appendix \ref{theta-x3-prf}:
\begin{lm}
Let $\gamma=\{e_1,\ldots,e_N\}$ be a graph. Then for every $(x^I_{j})\in\R^{3N}$ there exists a global coframe $(\theta^I)$ on $\Sigma$ compatible (incompatible) with the orientation of the manifold such that  
\[
\int_{e_{j}}\theta^I=x^I_{j}
\]      
for every $I=1,2,3$ and $j=1,2,\ldots,N$. 
\label{theta-x3}
\end{lm}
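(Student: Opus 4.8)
The plan is to construct a global coframe $(\theta^I)$ on $\Sigma$ realizing the prescribed values of the line integrals $\int_{e_j}\theta^I = x^I_j$. The strategy mirrors the one behind Lemma \ref{theta-x}: start from an arbitrary reference global coframe (which exists because $\Sigma$ is orientable and — as is standard for a $3$-manifold — parallelizable, so a global coframe certainly exists; if an orientation-compatible one is wanted, simply permute two of the three forms to flip the sign of the coframe orientation), and then correct it by adding an exact one-form (or, more flexibly, a closed one-form supported near the graph) in order to adjust the integrals over the $e_j$ to the desired values. The subtlety compared with Lemma \ref{theta-x} is that here the corrected triplet must remain a \emph{coframe}, i.e. linearly independent at every point of $\Sigma$, whereas in Lemma \ref{theta-x} the four forms $(\theta^A)$ only had to satisfy the weaker Condition \ref{lin-cond}.

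First I would reduce to the case of a single form. For fixed $I$, I need one-forms whose integrals over $e_1,\dots,e_N$ are prescribed reals. Pick, for each edge $e_j$, a smooth bump one-form $\omega_j$ with support in a small tubular neighbourhood $U_j$ of $e_j$, chosen so that the $U_j$ are mutually disjoint except possibly near shared endpoints, and so that $\int_{e_k}\omega_j = \delta_{jk}\,c_j$ for a fixed nonzero constant $c_j$ (such forms are built locally in a coordinate chart adapted to $e_j$, using that an edge is a $1$-dimensional submanifold; the disjointness of interiors of distinct edges in a graph, guaranteed by the definition of graph, makes the interference at endpoints harmless because an endpoint contributes zero to a line integral). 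Then $\sum_j (x^I_j/c_j)\,\omega_j$ has exactly the right integrals. Doing this for $I=1,2,3$ produces a triplet $(\mu^I)$ of one-forms with $\int_{e_j}\mu^I = x^I_j$, but with support confined to the union $\bigcup_j U_j$.

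The remaining issue is linear independence. The triplet $(\mu^I)$ need not be a coframe — it vanishes outside $\bigcup_j U_j$. To fix this, I would scale: replace the fixed reference coframe $(\vartheta^I)$ by $t\,\vartheta^I$ for a large parameter $t$, note that $t\vartheta^I$ has integrals $t\int_{e_j}\vartheta^I$ over the edges, and instead realize the \emph{difference} $x^I_j - t\int_{e_j}\vartheta^I$ by the bump construction, obtaining $\theta^I := t\,\vartheta^I + \nu^I$ where $\nu^I$ is supported in $\bigcup_j U_j$ and $\int_{e_j}\theta^I = x^I_j$. At every point the matrix of $(\theta^I)$ in a local frame is $t$ times the (everywhere invertible) matrix of $(\vartheta^I)$ plus the matrix of $(\nu^I)$; on the compact closure of $\bigcup_j U_j$ the entries of $(\nu^I)$ are bounded independently of $t$, so for $t$ large enough the sum is invertible there, and outside $\bigcup_j U_j$ it equals $t(\vartheta^I)$, which is invertible. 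Hence $(\theta^I)$ is a global coframe. Choosing the orientation of the reference coframe $(\vartheta^I)$ to be compatible (resp. incompatible) with that of $\Sigma$, and keeping $t>0$, the resulting $(\theta^I)$ inherits the same orientation, since for large $t$ the determinant has the sign of $t^3\det(\vartheta^I)$.

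The step I expect to be the main obstacle — though it is more a bookkeeping matter than a conceptual one — is the simultaneous control of the bump forms near the shared endpoints of edges of the graph and the uniform (in $t$) bound on $(\nu^I)$ needed for the linear-independence argument; one must be slightly careful that shrinking the tubular neighbourhoods $U_j$ to achieve disjointness does not force the bump forms to blow up, but since the target values $x^I_j$ are \emph{fixed} before $U_j$ is chosen, one first fixes the $U_j$ small and disjoint (away from endpoints) and then builds the bumps, so no blow-up occurs. Everything else is routine differential topology, and the $t\to\infty$ trick handles the coframe condition cleanly.
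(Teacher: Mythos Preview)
Your argument contains a genuine gap in the linear-independence step. You define $\theta^I := t\,\vartheta^I + \nu^I$ where $\nu^I$ is built from the bump forms so as to realize the targets $x^I_j - t\int_{e_j}\vartheta^I$; but these targets depend linearly on $t$, and hence so does $\nu^I$. Your claim that ``the entries of $(\nu^I)$ are bounded independently of $t$'' is therefore false: writing $a^I_j := \int_{e_j}\vartheta^I$ one has $\nu^I = \sum_j c_j^{-1}(x^I_j - t\,a^I_j)\,\omega_j$, so both summands $t\,\vartheta^I$ and $\nu^I$ grow like $t$ and neither dominates the other. Equivalently,
\[
\theta^I \;=\; t\Bigl(\vartheta^I - \sum_j c_j^{-1} a^I_j\,\omega_j\Bigr) \;+\; \sum_j c_j^{-1} x^I_j\,\omega_j,
\]
so your large-$t$ argument would work only if the \emph{fixed} triplet $\tilde\vartheta^I := \vartheta^I - \sum_j c_j^{-1} a^I_j\,\omega_j$ were already a coframe. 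But establishing that is essentially the lemma itself (with targets $x^I_j=0$), so the reasoning is circular. Your final paragraph misidentifies the difficulty: the obstruction is not the size of the fixed targets $x^I_j$ but the $t$-dependence of the differences $x^I_j - t\,a^I_j$.

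The paper avoids this problem by never adding uncontrolled one-forms. It modifies a given coframe one edge at a time (Lemma~\ref{theta-x3-nz}), and each modification is built entirely from operations that \emph{manifestly} preserve the coframe property: pullback by a compactly supported diffeomorphism (a local rotation aligning a sub-arc of the edge with the integral curves of $x^IY_I$), multiplication by an everywhere positive scalar function, and finally a linear transformation $\bar\theta^I = \theta''^I - 3(t^I/x^1)\,\phi\,\theta''^{\,1}$ whose determinant is checked to be positive thanks to a preparatory step forcing $|t^1| < |x^1|/3$. It is precisely this kind of control --- working inside $GL(3)$-valued gauge transformations rather than adding arbitrary bump forms --- that your approach is missing.
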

\noindent The following conclusion is a simple consequence of Lemmas \ref{th-bij-0}, \ref{al-b}, and \ref{theta-x3}:
\begin{cor}
Let $u=\{y_1,\ldots,y_M\}$ be a finite collection of points in $\Sigma$ and $\gamma=\{e_1,\ldots,e_N\}$ be a graph such that either $u$ or $\gamma$ is not an empty set ($N,M\geq0$ but $N+M>0$). Then for every $(z^I_{i},x^J_{j})\in\mathbb{B}^{M}\times \R^{3N}$ there exists $\theta\in\Theta_+(\Theta_-)$ such that  
\begin{align*}
&\kappa^{\prime I}_{y_{i}}(\theta)=z^I_{i},&&\kappa^J_{e_{j}}(\theta)=x^J_{j}
\end{align*}
for every $I,J=1,2,3$, $i=1,\ldots,M$ and $j=1,2,\ldots,N$. 
\label{cor-Kug}
\end{cor}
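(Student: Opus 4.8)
The plan is to assemble the corollary directly out of the three lemmas, by prescribing the coframe part and the $\alpha$-part of an element of $\Theta$ independently and then gluing them together through the correspondence of Lemma \ref{th-bij-0}.

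First I would apply Lemma \ref{theta-x3} to the graph $\gamma$ and the data $(x^J_j)\in\R^{3N}$: this produces a global coframe $(\theta^I)$ on $\Sigma$, compatible with the fixed orientation if we are aiming at $\Theta_+$ and incompatible with it if we are aiming at $\Theta_-$, such that $\int_{e_j}\theta^J=x^J_j$ for all $J=1,2,3$ and $j=1,\ldots,N$. (When $N=0$ the graph is empty and Lemma \ref{theta-x3} simply furnishes an arbitrary global coframe of the prescribed orientation.) Independently, I would apply Lemma \ref{al-b} to the set $u$ and the data $(z^I_i)\in\mathbb{B}^{M}$ to obtain real functions $(\alpha_I)$ on $\Sigma$ satisfying $\alpha_I\alpha^I<1$ everywhere and $\alpha^I(y_i)=z^I_i$ for all $I=1,2,3$ and $i=1,\ldots,M$. (When $M=0$ one may just take $\alpha_I\equiv0$.) Since $N+M>0$ at least one of these two steps carries genuine data, but in every case both objects exist.

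Next I would feed the pair $(\alpha_I,\theta^J)$ into Lemma \ref{th-bij-0}: because the $\alpha_I$ satisfy \eqref{aa<1'} and the $\theta^J$ form a global coframe, formula \eqref{alth->} yields a well-defined element
\[
\theta=(\theta^0=\alpha_I\theta^I,\ \theta^J)\in\Theta,
\]
and since the coframe $(\theta^I)$ was chosen compatible (incompatible) with the orientation of $\Sigma$, this $\theta$ belongs to $\Theta_+$ ($\Theta_-$) by the definition of the splitting $\Theta=\Theta_+\cup\Theta_-$. Finally I would read off the required values: by the definition \eqref{al-dof} of $\kappa^{\prime I}_y$ one has $\kappa^{\prime I}_{y_i}(\theta)=\alpha^I(y_i)=z^I_i$, and by the definition \eqref{th-dof} of $\kappa^J_e$ one has $\kappa^J_{e_j}(\theta)=\int_{e_j}\theta^J=x^J_j$, for all $I,J=1,2,3$, $i=1,\ldots,M$, $j=1,\ldots,N$, which is exactly the assertion.

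I do not expect any real obstacle here: all the analytic work is already contained in Lemmas \ref{al-b} and \ref{theta-x3}, and the only point that deserves a moment's attention is that the $\alpha$-part and the coframe part of an element of $\Theta$ can be prescribed completely independently of one another --- which is precisely what the product form $\mathbb{B}^{M}\times\R^{3N}$ of the target set encodes and what Lemma \ref{th-bij-0} makes rigorous.
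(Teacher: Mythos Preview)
Your proof is correct and follows exactly the route the paper indicates: the paper itself does not spell out an argument but simply states that the corollary is ``a simple consequence of Lemmas \ref{th-bij-0}, \ref{al-b}, and \ref{theta-x3}'', and your three-step assembly (coframe from Lemma \ref{theta-x3}, $\alpha$-data from Lemma \ref{al-b}, gluing via Lemma \ref{th-bij-0}) is precisely that consequence made explicit. Your handling of the degenerate cases $N=0$ and $M=0$ is also appropriate.
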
  
\noindent The image of $\tilde{K'}_{u,\gamma}$ is then $\mathbb{B}^M\times\R^{3N}$ and   the reduced configuration space
\[
\Theta_{K'_{u,\gamma}}\cong \mathbb{B}^M\times\R^{3N}.
\]
This means that the set $\bar{\cal K}'$ {\em does not} satisfy Assumption \ref{as-bij}.

On the other hand, $\bar{\cal K}'$ meets Assumption \ref{as-bgi}. A diffeomorphism $\varphi$ on $\Sigma$ maps a d.o.f. $\kappa^{\prime I}_y$ into a function on $\Theta$ given by 
\[
(\varphi^*\kappa^{\prime I}_y)(\theta):=(\varphi^*\alpha^I)(y)=\alpha^I(\varphi(y))=\kappa^{\prime I}_{\varphi(y)}(\theta).
\]       
Of course, $\bar{\cal K}'$ is preserved by the action of all diffeomorphisms on $\Sigma$. 

The conclusion is that the set $\bar{\cal K}'$ does not meet Assumptions \ref{as-bij} but satisfies all remaining ones. Moreover, the variables $(\alpha_I,\theta^J)$ can be used to define Lorentzian metrics on $\Sigma$ provided we give up the condition \eqref{aa<1'}---if $\alpha_I\alpha^I>1$ then the eigenvalue \eqref{1-aa} of the matrix \eqref{qIJ} is negative and the resulting metric \eqref{q-th^I} is Lorentzian. However, there is a progress with respect to the previously considered variables $(\theta^A)$ and the corresponding d.o.f. in $\bar{\cal K}$, because now if a sextuplet $(\alpha_I,\theta^J)$ defines a metric which is Lorentzian on a subset of $\Sigma$ then any triplet $\{ \ \kappa^{\prime I}_y\ | \ I=1,2,3\ \}$ of d.o.f. with $y$ belonging to the subset can be used to distinguish between this sextuplet $(\alpha_I,\theta^J)$ and ones belonging to $\Theta$.

Fortunately, it is not difficult to transform the variables $(\alpha_I,\theta^J)$ to ones which cannot define Lorentzian metrics and which naturally provide d.o.f. satisfying all Assumptions. Indeed, it is easy to realize that the only source of the two problems with the variables $(\alpha_I,\theta^J)$ is the fact that every triplet $(\alpha_I)$ corresponding to an element of $\Theta$ defines a function on $\Sigma$ valued in the ball $\mathbb{B}$. Thus to remove the problems it is enough to choose a diffeomorphism from $\mathbb{B}$ onto $\R^3$,  
\[
\mathbb{B}\ni(z_J)\mapsto \tau(z_J)=\Big(\tau^1(z_J),\tau^2(z_J),\tau^3(z_J)\Big)\in\R^3
\]
and define new variables as
\[
(\tau^I(\alpha_K),\theta^J)
\]
and 
\begin{equation}
\Theta\ni\theta\mapsto\kappa^I_y(\theta):=\tau^I(\alpha_J(y))\in\R
\label{tau-dof}
\end{equation}
as a new elementary d.o.f. instead of \eqref{al-dof}. 

However, there are many diffeomorphisms of this sort and the question is which one should we use? Or, is there a distinguished diffeomorphism? As we will show below a pair of such diffeomorphisms is distinguished by an ADM-like Hamiltonian framework of TEGR. 

\subsection{New variables and new d.o.f.}

Given $(\theta^A)\in\Theta$, consider the following equations imposed on smooth functions $\xi^A$ ($A=0,1,2,3$) on $\Sigma$ \cite{nester}:
\begin{align}
&\xi^A\theta_A=0, &&\xi^A\xi_A=-1.
\label{xi-df}
\end{align}
Solutions of these equations play an important role in deriving an ADM-like Hamiltonian framework of TEGR \cite{nester,oko-tegr} and YMTM \cite{os}---the configuration variable of Lagrangian formulations of TEGR and YMTM is a cotetrad field on a four-dimensional manifold; the cotetrad field is decomposed into ``time-like'' and ``space-like'' parts the latter one being $(\theta^A)\in\Theta$; then a solution of \eqref{xi-df} is used to express the ``time-like'' part as a function of the ADM lapse function, the ADM shift vector field and $(\theta^A)$. Moreover, a solution of \eqref{xi-df} appears in formulae describing constraints of both TEGR and YMTM, and the equations \eqref{xi-df} are used repeatedly while deriving constraint algebras of both theories \cite{oko-tegr,oko-tegr-der,os}.

Note that at every point $y\in\Sigma$ the values $(\xi^A(y))$ of a solution of \eqref{xi-df} form a time-like vector in $\mathbb{M}$ which means that the value $\xi^0(y)$ cannot be $0$. Taking into account the assumed smoothness of $\xi^A$ we can expect that there exist exactly two distinct solutions of \eqref{xi-df} which can be distinguished by the sign of $\xi^0$. As shown in \cite{os} by presenting explicite solutions of \eqref{xi-df} the expectation is correct.

Surprisingly, it turns out that there is a simple relation between the variables $(\alpha_I)$ and the space-like components $\xi^I$ of $\xi^A$ being a solution of \eqref{xi-df} and this relation provides us with two diffeomorphisms of the sort we need. Indeed, taking into account Equation \eqref{th0-ath} we see that
\begin{equation}
(\xi^A)\equiv(\xi^0,\xi^I)=\xi^0(1,\alpha^I)
\label{xi-al}
\end{equation}
satisfy the first equation \eqref{xi-df}. Setting this result to the second equation \eqref{xi-df} we obtain
\[
\xi^0=\pm\frac{1}{\sqrt{1-\alpha_J\alpha^J}}=\frac{\sgn(\xi^0)}{\sqrt{1-\alpha_J\alpha^J}},
\]
where $\sgn(\xi^0)=\pm 1$ is the sign of $\xi^0$. Thus
\[
(\xi^A)=\frac{\sgn(\xi^0)}{\sqrt{1-\alpha_J\alpha^J}}(1,\alpha^I)
\]     
and 
\begin{equation}
\xi^I=\sgn(\xi^0)\frac{\alpha^I}{\sqrt{1-\alpha_J\alpha^J}}.
\label{BR3-diff}
\end{equation}
Clearly, the r.h.s. of the equation above defines two diffeomorphisms from $\mathbb{B}$ onto $\R^3$ 
\begin{equation}
\mathbb{B}\ni(z_J)\mapsto\tau(z_J):=\sgn(\xi^0)\Big(\frac{z^I}{\sqrt{1-z_Lz^L}}\Big)\in\R^3
\label{tau}
\end{equation}
and both seem to be equally well suited for our goal. 
     
In this way we obtain new variables on the Hamiltonian configuration space $\Theta$: 

\begin{lm}
Given function $\iota$ defined on the space of all global coframes on $\Sigma$ and valued in the set $\{1,-1\}$, there exists a one-to-one correspondence between elements of $\Theta$ and all sextuplets $(\xi^I,\theta^J)$ consisting of
\begin{enumerate}
\item functions $\xi^I$, $I=1,2,3$, on $\Sigma$, 
\item one-forms  $\theta^J$, $J=1,2,3$, on $\Sigma$  constituting a global coframe on the manifold. 
\end{enumerate}  
The correspondence is given by 
\begin{equation}
(\xi^I,\theta^J)\mapsto \Big(\theta^0=\iota(\theta^L)\frac{\xi_I}{\sqrt{1+\xi_K\xi^K}}\theta^I,\theta^J\Big)\in\Theta.
\label{th-bij-eq}
\end{equation}
\label{th-bij}
\end{lm}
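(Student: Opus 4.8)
The plan is to deduce Lemma \ref{th-bij} from Lemma \ref{th-bij-0} by composing the parametrization \eqref{alth->} with a fibrewise change of variables that trades the $\mathbb{B}$-valued field $(\alpha_I)$ for an unconstrained $\R^3$-valued field $(\xi^I)$. By Lemma \ref{th-bij-0} it suffices to exhibit, for the given $\iota$, a bijection between the pairs $(\alpha_I,\theta^J)$ with $(\theta^J)$ a global coframe and $\alpha_I\alpha^I<1$ on the one hand, and the sextuplets $(\xi^I,\theta^J)$ with $(\theta^J)$ a global coframe on the other, which is compatible with formula \eqref{th-bij-eq}.

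The first step is to record the pointwise fact already used to write down \eqref{tau}: for each fixed sign $s\in\{1,-1\}$, the map $\mathbb{B}\ni(z_J)\mapsto s\,z^I/\sqrt{1-z_Lz^L}\in\R^3$ is a diffeomorphism whose inverse is $\R^3\ni(w^I)\mapsto s\,w^I/\sqrt{1+w_Kw^K}\in\mathbb{B}$; both maps are smooth since $\sqrt{1-z_Lz^L}>0$ on $\mathbb{B}$ and $\sqrt{1+w_Kw^K}\geq1$, and they are mutually inverse because $s^2=1$. Applying this pointwise with $s=\iota(\theta^L)$, I would set $\alpha_I:=\iota(\theta^L)\,\xi_I/\sqrt{1+\xi_K\xi^K}$ and $\theta^0:=\alpha_I\theta^I$, which is precisely the right-hand side of \eqref{th-bij-eq}. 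Since $\alpha_I\alpha^I=\xi_I\xi^I/(1+\xi_K\xi^K)<1$ automatically, the pair $(\alpha_I,\theta^J)$ meets the hypotheses of Lemma \ref{th-bij-0}, so $(\theta^0,\theta^J)\in\Theta$ and the assignment is well defined into $\Theta$.

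For bijectivity I would argue both directions explicitly. Given $(\theta^A)\in\Theta$, Lemma \ref{th-bij-0} yields the unique $(\alpha_I,\theta^J)$; putting $\xi^I:=\iota(\theta^L)\,\alpha^I/\sqrt{1-\alpha_J\alpha^J}$ and using that the two pointwise maps above are mutual inverses (the two factors $\iota(\theta^L)$ cancel because $\iota(\theta^L)^2=1$) shows that this sextuplet is sent back to $(\theta^A)$ by \eqref{th-bij-eq}; hence the map is onto. It is one-to-one because a sextuplet $(\xi^I,\theta^J)$ mapping to a given $(\theta^A)$ must have $(\theta^J)$ equal to the coframe part of $(\theta^A)$, which fixes the value $\iota(\theta^L)$, after which $(\xi^I)$ is recovered uniquely from $(\alpha_I)$ via the pointwise bijection, and $(\alpha_I)$ itself is unique by Lemma \ref{th-bij-0}. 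Finally one checks that $(\xi^I)$ has the same regularity as $(\alpha_I)$, so that the correspondence respects the smoothness implicit in the definition of $\Theta$.

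I do not anticipate a genuine difficulty: the structural content is supplied entirely by Lemma \ref{th-bij-0} together with the diffeomorphism \eqref{tau}. The only points needing attention are the bookkeeping of the sign function $\iota$---it must occur symmetrically in the forward and backward pointwise maps so that it cancels under composition---and the fact that the expressions rational in $(\xi^I)$ (respectively in $(\alpha_I)$) are smooth, which follows because their denominators $\sqrt{1+\xi_K\xi^K}$ and $\sqrt{1-\alpha_J\alpha^J}$ are everywhere nonvanishing.
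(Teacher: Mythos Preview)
Your proposal is correct and follows essentially the same route as the paper: both arguments factor the map \eqref{th-bij-eq} through Lemma \ref{th-bij-0} by composing \eqref{alth->} with the fibrewise diffeomorphism \eqref{tau} (and its inverse), using $\iota(\theta^L)$ as the sign and noting that $\iota(\theta^L)^2=1$ so the forward and backward maps compose to the identity. Your version is slightly more explicit in separately checking surjectivity, injectivity, and smoothness, but the underlying idea is identical.
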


\begin{proof}
Given $\sgn(\xi^0)=\pm 1$, the map 
\begin{equation}
\R^3\ni (z^I)\mapsto \tau^{-1}(z^I)=\sgn(\xi^0)\Big(\frac{z_I}{\sqrt{1+z_Lz^L}}\Big)\in\mathbb{B}
\label{tau-1}
\end{equation}
is the inverse of the diffeomorphism \eqref{tau} and therefore the map 
\begin{equation}
(\xi^I)\mapsto \sgn(\xi^0)\Big(\frac{\xi_I}{\sqrt{1+\xi_L\xi^L}}\Big)
\label{xi->al}
\end{equation}
describes a one-to-one correspondence between all triplets $(\xi^I)$ of real functions on $\Sigma$ and all triplets $(\alpha_I)$ of real functions on the manifold such that $\alpha_I\alpha^I<1$. Consequently, given function $\iota$, the map 
\begin{equation}
(\xi^I,\theta^J)\mapsto \Big(\iota(\theta^L)\frac{\xi_I}{\sqrt{1+\xi_L\xi^L}},\theta^J\Big)
\label{xith->}
\end{equation}
is a bijection from the set of all pairs $(\xi^I,\theta^J)$ as described in Lemma \ref{th-bij} onto the set of all pairs $(\alpha_I,\theta^J)$ as described in Lemma \ref{th-bij-0}---the inverse map to \eqref{xith->} reads
\begin{equation}
(\alpha_I,\theta^J)\mapsto \Big(\iota(\theta^L)\frac{\alpha^I}{\sqrt{1-\alpha_L\alpha^L}},\theta^J\Big).
\label{al->xi}
\end{equation}

To finish the proof it is enough to note that the composition of the bijection \eqref{xith->} with the bijection \eqref{alth->} gives the map \eqref{th-bij-eq}.
     
\end{proof}

Let us emphasize that Lemma \ref{th-bij} describes a family of distinct variables $(\xi^I,\theta^J)$ which differ from each other by the choice of the function $\iota$. To understand the role of the function $\iota$ let us fix both the function and a pair $(\xi^I,\theta^J)$ such that $\xi^I\xi_I$ is not the zero function on $\Sigma$. Solving Equations \eqref{xi-df} given by the quadruplet $(\theta^A)\in\Theta$ corresponding to $(\xi^I,\theta^J)$ via \eqref{th-bij-eq} we obtain two solutions $\xi^A$ which differ from each other by the sign of $\xi^0$. Note now that the one form $\theta^0$ can be expressed in terms the solutions $\xi^A$---using \eqref{xi->al} being the inverse map to one defined by \eqref{BR3-diff} we obtain
\[
\theta^0=\alpha_I\theta^I=\sgn(\xi^0)\frac{\xi_I}{\sqrt{1+\xi_K\xi^K}}\theta^I.
\]
Comparing this with \eqref{th-bij-eq} we conclude that the variables $(\xi^I)$ constituting the fixed pair $(\xi^I,\theta^J)$ coincide with space-like components of this solution $\xi^A$ for which $\sgn(\xi^0)=\iota(\theta^L)$. Thus the function $\iota$ allows us to relate unambiguously the variables $(\xi^I)$ to components of one of the two solutions of \eqref{xi-df}.   
 
Consider new variables $(\xi^I,\theta^J)$ given by a function $\iota$. Now we can express the formula \eqref{tau-dof} defining new elementary d.o.f. in the following form
\begin{equation}
\Theta\ni\theta\mapsto \kappa^I_y(\theta)=\xi^I(y)\in\R,
\label{xi-dof}
\end{equation}
where $y\in\Sigma$. 
Let 
\[
{\cal K}:=\{\ \kappa^{I}_y, \kappa^J_e \ \},
\]  
where $I,J=1,2,3$, $y$ runs over $\Sigma$, $e$ over the set of all edges in the manifold and $\kappa^J_e$ is given by \eqref{th-dof}. We choose ${\cal K}$ to be a set of configurational elementary d.o.f. generated by the variables $(\xi^I,\theta^J)$.

Taking into account the properties of the set $\bar{\cal K}'$ described in the previous subsection and \eqref{xi-dof} we immediately conclude that the new set $\cal K$ satisfies Assumptions \ref{as-sep}, \ref{as-poly} and \ref{as-bgi}. Diffeomorphisms on $\Sigma$ act on elements of $\cal K$ as they do on ones of $\bar{\cal K}'$ hence we have
\begin{align*}
\varphi^*\kappa^I_y&=\kappa^I_{\varphi(y)}, &\varphi^*\kappa^J_e=\kappa^J_{\varphi(e)}.
\end{align*}
Obviously, $\cal K$ is preserved by the action.      

Corollary \ref{cor-Kug} and the relation between $(\xi^I)$ and $(\alpha_J)$ (see \eqref{xith->} and \eqref{al->xi}) allows us to formulate the following lemma:
\begin{lm}
Let $u=\{y_1,\ldots,y_M\}$ be a finite collection of points in $\Sigma$ and $\gamma=\{e_1,\ldots,e_N\}$ be a graph such that either $u$ or $\gamma$ is not an empty set ($N,M\geq0$ but $N+M>0$). Then for every $(z^I_{i},x^J_{j})\in\R^{3M}\times \R^{3N}$ there exists $\theta\in\Theta_+(\Theta_-)$ such that  
\begin{align*}
&\kappa^I_{y_{i}}(\theta)=z^I_{i},&&\kappa^J_{e_{j}}(\theta)=x^J_{j}
\end{align*}
for every $I,J=1,2,3$, $i=1,\ldots,M$ and $j=1,2,\ldots,N$.
\label{lm-Kug-xi}
\end{lm}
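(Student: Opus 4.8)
The plan is to obtain the statement from Corollary \ref{cor-Kug} by transporting the prescribed values of the new d.o.f. $\kappa^{I}_{y}$ through the point-wise change of variables \eqref{al->xi}. Recall that $\kappa^{I}_{y}$ and the old d.o.f. $\kappa^{\prime I}_{y}$ are related, at each point, by the diffeomorphism $\tau^{-1}$ of \eqref{tau-1}, which maps $\R^{3}$ bijectively onto $\mathbb{B}$; hence any triplet of real numbers prescribed for $(\xi^{I}(y_{i}))$ corresponds to a unique triplet in $\mathbb{B}$ prescribed for $(\alpha^{I}(y_{i}))$, and Corollary \ref{cor-Kug} already tells us how to realise prescribed $\mathbb{B}$-valued data for the $\alpha$'s together with prescribed edge integrals. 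The only subtlety is that the map \eqref{al->xi} carries the factor $\iota(\theta^{L})$, which depends on the coframe, i.e. on the (as yet unknown) $\theta$ we are constructing; I would deal with this by choosing the coframe \emph{first}.

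In detail, I would treat the case $\theta\in\Theta_{+}$ (the case $\theta\in\Theta_{-}$ being identical, with ``compatible'' replaced by ``incompatible with the orientation''). First, by Lemma \ref{theta-x3} fix a global coframe $(\theta^{I})$ on $\Sigma$, compatible with the orientation, with $\int_{e_{j}}\theta^{I}=x^{I}_{j}$ for all $j$ (for $N=0$ this is merely the choice of some oriented global coframe, still provided by Lemma \ref{theta-x3}). Now the number $\epsilon:=\iota(\theta^{L})\in\{1,-1\}$ is determined. Next, for $i=1,\dots,M$ put
\[
a^{I}_{i}:=\epsilon\,\frac{z^{I}_{i}}{\sqrt{\,1+\sum_{L}(z^{L}_{i})^{2}\,}},
\]
so that $\sum_{I}(a^{I}_{i})^{2}<1$, i.e. $(a^{1}_{i},a^{2}_{i},a^{3}_{i})\in\mathbb{B}$; by Lemma \ref{al-b} pick real functions $(\alpha_{I})$ on $\Sigma$ valued in $\mathbb{B}$ with $\alpha^{I}(y_{i})=a^{I}_{i}$ (this is vacuous if $M=0$). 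Then $\theta:=(\theta^{0}=\alpha_{I}\theta^{I},\theta^{J})$ belongs to $\Theta$ by Lemma \ref{th-bij-0}, and to $\Theta_{+}$ because its coframe $(\theta^{I})$ is compatibly oriented.

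It then remains to verify the d.o.f. values. That $\kappa^{J}_{e_{j}}(\theta)=\int_{e_{j}}\theta^{J}=x^{J}_{j}$ is immediate. For the new d.o.f., by \eqref{xi-dof} and the correspondence \eqref{al->xi} of Lemma \ref{th-bij} (with $\iota(\theta^{L})=\epsilon$) one has $\xi^{I}=\epsilon\,\alpha^{I}/\sqrt{1-\alpha_{L}\alpha^{L}}$, so that at $y_{i}$, using $1-\sum_{L}(a^{L}_{i})^{2}=(1+\sum_{L}(z^{L}_{i})^{2})^{-1}$, the factor $\epsilon$ squares away and $\kappa^{I}_{y_{i}}(\theta)=\xi^{I}(y_{i})=z^{I}_{i}$. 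I expect the only point requiring care to be precisely this bookkeeping with $\iota$: because $\iota$ is evaluated on the coframe, the order in which one applies Lemmas \ref{theta-x3} and \ref{al-b} matters, and one must fix the coframe before selecting the $\alpha$'s; once that is done the argument is a routine recombination of Lemmas \ref{theta-x3}, \ref{al-b} and \ref{th-bij-0}, equivalently Corollary \ref{cor-Kug} post-composed with $\tau^{-1}$, and I foresee no further obstacle.
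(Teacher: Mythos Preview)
Your proof is correct and follows exactly the route the paper indicates: the paper does not give a standalone proof of this lemma but simply states that it follows from Corollary~\ref{cor-Kug} together with the relation between $(\xi^I)$ and $(\alpha_J)$ recorded in \eqref{xith->} and \eqref{al->xi}. Your argument is precisely that---you invert the diffeomorphism $\tau$ pointwise to convert the prescribed $\xi$-data into $\mathbb{B}$-valued $\alpha$-data and then invoke Lemmas~\ref{theta-x3} and~\ref{al-b} (i.e.\ the ingredients of Corollary~\ref{cor-Kug}); you are in fact more careful than the paper, since you make explicit that one must fix the coframe first so that $\iota(\theta^L)$ is determined before choosing the $\alpha_I$'s, a point the paper leaves implicit.
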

\noindent Thus for every finite set 
\begin{equation}
{K}_{u,\gamma}:=\{ \ {\kappa}^I_{y_1},\ldots,{\kappa}^I_{y_M},\kappa^J_{e_1},\ldots,\kappa^J_{e_N} \ | \ I,J=1,2,3 \ \}.
\label{K-ug1}
\end{equation}
the map $\tilde{K}_{u,\gamma}$ given by \eqref{k-inj} is bijective. In other words, 
\[
\Theta_{K_{u,\gamma}}\cong\R^{3M}\times\R^{3N}.
\]
We conclude that the set $\cal K$ with a directed\footnote{The relation $\geq$ is defined as described just below the formula \eqref{K-ug}.} set $(K_{u,\gamma},\geq)$ given by all finite subsets of $\Sigma$ and all graphs in the manifold  meets Assumption \ref{as-bij}. 

Let us finally make sure that the new variables $(\xi^I,\theta^J)$ cannot define Lorentzian metrics on $\Sigma$. By virtue of \eqref{q-th^I} and \eqref{xith->} the metric $q$ on $\Sigma$ given by the variables  can be expressed as
\begin{equation}
q=q_{IJ}\theta^I\ot\theta^J=\Big(\delta_{IJ}-\frac{\xi_I\xi_J}{1+\xi^K\xi_K}\Big)\theta^I\ot\theta^J.
\label{q-xi}
\end{equation}
The eigenvalues $1$ and $1-\alpha_I\alpha^I$ of the matrix $(q_{IJ})$ found in the proof of Lemma \ref{al-th} (see Equation \eqref{1-aa}) expressed in terms of $(\xi^I)$ read $1$ and $(1+\xi_I\xi^I)^{-1}$. Consequently, the matrix $(q_{IJ})$ is positive definite for every $(\xi^I)$. Thus even if a triplet $(\theta^J)$ is not a global coframe on $\Sigma$ the corresponding metric $q$ satisfy
\[
q(Y,Y)\geq 0
\]           
for every vector $Y$ tangent to $\Sigma$. This means that the new variables cannot describe any metric on the manifold which locally or globally is Lorentzian.

It is also worth to note that if $\xi^I=0$ then $(\theta^J)$ is an orthonormal coframe with respect to $q$---this fact can be easily deduced from \eqref{q-xi}. Thus we can regard $(\xi^I)$ as variables indicating how much the coframe $(\theta^J)$ deviates from being orthonormal with respect to $q$ (of course, the same can be said about $(\alpha_I)$).    

We conclude that for every function $\iota$ the set $\cal K$ of d.o.f. defined by corresponding new variables $(\xi^I,\theta^J)$ satisfy all Assumptions listed in Section \ref{intro}. Moreover, the variables cannot define Lorentzian metrics on $\Sigma$.

\section{Summary}

In this paper we showed that the natural variables $(\theta^A)$ on the Hamiltonian configuration space $\Theta$ of TEGR (and YMTM) can be used to build via the general method described in \cite{q-nl} the space $\bar{\D}$ of kinematic quantum states. The space $\bar{\D}$ is constructed in a background independent manner. It turned out that states constituting this space correspond not only to elements of $\Theta$, but also to quadruplets $(\theta^A)$ which define Lorentzian metrics on the manifold $\Sigma$ being a space-like slice of a spacetime. Since the task of isolating quantum states in $\bar{\D}$ which do not correspond to Lorentzian metrics seems to be very difficult we decided to look for other more suitable variables.

The results of our inquiry is the family $\{(\xi^I,\theta^J)\}$ of variables  parameterized by functions $\{\iota\}$ defined on the set of all global coframes on $\Sigma$ and valued in the set $\{-1,1\}$. Each element of the family satisfies all Assumptions presented in Section \ref{intro} and cannot define any Lorentzian metric on $\Sigma$. Therefore we expect  that at least some of the variables can be used to define in a background independent way a space of kinematic quantum states for TEGR free of the undesired property of the space $\bar{\D}$. We will show in \cite{q-stat} by an explicite construction that the expectation is correct. 

However, at this moment we are not completely ready for a construction of a space of quantum states from variables $(\xi^I,\theta^J)$ because of the following reason. Recall that we would like to apply the Dirac's approach to a canonical quantization of TEGR which  means that once a space of kinematic quantum states is constructed we will have to impose on the states ``quantum constraints'' as counterparts of constraints on the phase space of TEGR---this is the second step of the Dirac's quantization procedure. The problem is that it is not obvious whether every element of the family $\{(\xi^I,\theta^J)\}$ generates a space of quantum states suitable for defining ``quantum constraints'' on it. 

Although at this stage we are not able to solve this problem completely, we will address the issue in the accompanying paper \cite{ham-nv}---we will show there that indeed some variables $(\xi^I,\theta^J)$ are quite problematic. Namely, the constraints of TEGR (and YMTM) when expressed in terms of these variables depend on a special function defined on $\Theta$. It turns out that this function cannot be even approximated by functions on any $\Theta_{K_{u,\gamma}}$. This means that in the case of a space of kinematic quantum states built from such variables we will not be able to define ``quantum constraints'' by means of a family of restrictions such that each restriction is imposed on elements of a single space $\D_\lambda$. It is clear that if we are not able to define ``quantum constraints'' in such a way then this task becomes much more difficult. 

Fortunately, as it will be proven in \cite{ham-nv}, there exist exactly two closely related elements of the family $\{(\xi^I,\theta^J)\}$ for which the problem just described does not appear---the elements are closely related in this sense that functions $\{\iota\}$ distinguishing them differ from each other by a factor $-1$. Using one of these two elements we will construct in \cite{q-stat} a space $\D$ of kinematic quantum states for TEGR. The space $\D$ will be obviously free of the undesired property of $\bar{\D}$ and we hope that $\D$ will be also suitable for carrying out the second step of the Dirac's procedure.

\paragraph{Acknowledgments} This work was partially supported by the grant N N202 104838 of Polish Ministerstwo Nauki i Szkolnictwa Wy\.zszego.

\appendix

\section{Proof of Lemma \ref{theta-x3} \label{theta-x3-prf}}

\subsection{Preliminaries \label{app-prel}}

Let us recall a definition of a {\em simple edge}---it is a one-dimensional connected $C^\infty$ submanifold of $\Sigma$ with two-point boundary. On the other hand, a one-dimensional $C^\infty$ submanifold of $\Sigma$ with boundary is a subset $E$ of the manifold such that for every $x\in E$ there exists a neighborhood $U$ of $x$ open in $\Sigma$ and a $C^\infty$ coordinate chart $\chi$ on $U$ such that
\[
\chi(E\cap U)=\{\ (z^1,z^2,z^3)\in\R^3 \ | \ z^1=z^2=0, \ 0<z^3<1\ \}
\]      
or
\[
\chi(E\cap U)=\{\ (z^1,z^2,z^3)\in\R^3 \ | \ z^1=z^2=0, \ 0\leq z^3<1\ \}.
\]      
Consequently, given an {\em oriented} simple edge $e$ there exist numbers $a<0$ and $b>1$ and a smooth curve 
\[
]a,b[\ni\lambda\mapsto\tilde{e}(\lambda)\in\Sigma
\]   
such that $(i)$ $e=\tilde{e}([0,1])$, $(ii)$ the orientations of the curve and the edge coincide, $(iii)$ the vector $\dot{\tilde{e}}(\lambda)$ tangent to the curve at $\tilde{e}(\lambda)$ is non-zero for every $\lambda\in[0,1]$. Such a curve will be called {\em standard curve} for $e$. If $\tilde{e}$ is a standard curve for a simple edge $e$ and $\omega$ is a smooth one-form on $\Sigma$ then a map        
\[
]a,b[\ni\lambda \mapsto \omega(\dot{\tilde{e}}(\lambda))\in\R
\]
is smooth.

An edge is an {\em oriented} one-dimensional connected $C^0$ submanifold of $\Sigma$ given by a finite union of simple edges. Given an edge $e$ of two-point boundary, its orientation  allows to distinguish one of its endpoints as {\em a source} denoted by $\src$ and the other as {\em a target} denoted by $\tar$; if an edge is a loop then we choose one of its points and treat it as both the source and the target of the edge. We will call the set $e\setminus\{\src,\tar\}$ {\em interior} of the edge $e$ and will denote it by $\intr e$. Note that, given an oriented simple edge $e$ and its standard curve $\tilde{e}$, $\src=\tilde{e}(0)$, $\tar=\tilde{e}(1)$ and $\intr e=\tilde{e}(]0,1[)$. 

An edge $e$ is a composition of edges $e_1$ and $e_2$, $e=e_2\circ e_1$, if $(i)$ $e$ as an oriented manifold is a union of $e_1$ and $e_2$, $(ii)$ $\tar_1=\src_2$, $(iii)$ $e_1\cap e_2$ consists merely of some (or all) endpoints of $e_1$ and $e_2$. Every edge turns out to be a composition of oriented simple edges. 

Given a set $W\subset\Sigma$, its characteristic function is equal $1$ on $W$ and $0$ outside $W$. We will call a function $\phi:\Sigma\mapsto\R$ an {\em almost characteristic function of W} if it is {\em smooth} and is zero outside $W$, positive on $W$  and if there exists a non-empty open subset $W'\subset W$ such that $\phi$ is equal $1$ on $W'$. We will also denote by $\bld{1}$ a constant function on $\Sigma$ of values equal $1$.

To simplify the notation, {given quadruplet $(\theta^A)\in{\Theta}$},  we introduce the following symbol
\[
\theta^A(e)\equiv\int_e\theta^A=\kappa^A_e(\theta).
\]
For every composition $e_2\circ e_1$ 
\begin{equation}
\theta^A(e_2\circ e_1)=\theta^A(e_1)+\theta^A(e_2).
\label{ee-omega}
\end{equation}

\subsection{The proof}

Proving Lemma \ref{theta-x3} amounts to proving the following one:

\begin{lm}
Let $\gamma$ be a graph in $\Sigma$ consisting of oriented simple edges $\{e_1,\ldots,e_N\}$ and let $(\theta^I)$ be a smooth global coframe on $\Sigma$. Fix an edge $e_{i}$ of $\gamma$ and a non-zero vector $(x^I)\in\R^3$. Then the coframe $(\theta^I)$ can be deformed to a smooth global coframe $(\bar{\theta}^I)$ such that $(\bar{\theta}^I)$ defines the same orientation of $\Sigma$ as $(\theta^I)$ does and    
\[
\bar{\theta}^I(e_{j})=
\begin{cases}
x^I & \text{if $j=i$}\\
\theta^I(e_{j}) & \text{otherwise}
\end{cases}.
\]    
\label{theta-x3-nz}     
\end{lm}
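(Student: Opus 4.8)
The plan is to construct the deformed coframe $(\bar\theta^I)$ explicitly as $\bar\theta^I = \theta^I + \phi\,\omega^I$, where $\phi$ is an almost characteristic function supported in a small tube around the interior of $e_i$ and $\omega^I$ is a carefully chosen constant-coefficient combination of the $\theta^J$'s (with coefficients that are functions, not constants). The idea is that such a perturbation, being supported away from all vertices and away from every other edge, changes only the integral $\bar\theta^I(e_i)$ while leaving $\bar\theta^I(e_j)=\theta^I(e_j)$ for $j\ne i$. The two things to arrange are: (a) the perturbation realizes the prescribed shift $x^I-\theta^I(e_i)$ along $e_i$, and (b) $(\bar\theta^I)$ remains a global coframe everywhere on $\Sigma$ (i.e. remains pointwise linearly independent) and induces the same orientation.

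First I would set up the geometry: since $e_i$ is an oriented simple edge, fix a standard curve $\tilde e_i:\,]a,b[\to\Sigma$ with $e_i=\tilde e_i([0,1])$ and $\dot{\tilde e}_i(\lambda)\ne 0$ on $[0,1]$. Choose a tubular neighborhood $U$ of $\intr e_i$ disjoint from all the other edges of $\gamma$ and from the endpoints $\src_i,\tar_i$ (possible because $\gamma$ is a finite set of pairwise independent edges, so the interiors are pairwise disjoint and disjoint from all vertices, and $\Sigma$ is a manifold). Pick an almost characteristic function $\phi$ of $U$ that equals $1$ on a smaller subtube $U'$ still containing a subarc $\tilde e_i([\lambda_0,\lambda_1])$ with $0<\lambda_0<\lambda_1<1$. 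Then for any smooth one-forms $\omega^I$ supported appropriately, $\bar\theta^I=\theta^I+\phi\,\omega^I$ agrees with $\theta^I$ outside $U$, hence $\bar\theta^I(e_j)=\theta^I(e_j)$ for $j\ne i$, and $\bar\theta^I(e_i)=\theta^I(e_i)+\int_{e_i}\phi\,\omega^I$.

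Next I would choose $\omega^I$. Write $\omega^I = f\, c^I\, \eta$ where $\eta$ is a fixed smooth one-form, $c^I\in\R^3$ constants to be determined, and $f$ a smooth bump function supported in $U'$; I can take $\eta$ to be (a smooth extension of) the form dual to $\dot{\tilde e}_i$ along the subarc, so that $\int_{e_i}\phi\,\omega^I = c^I\int_{\lambda_0}^{\lambda_1} f(\tilde e_i(\lambda))\,\eta(\dot{\tilde e}_i(\lambda))\,d\lambda =: c^I\, m$ with $m>0$ a fixed positive constant (this is where the almost-characteristic/bump structure and condition $(iii)$ on the standard curve are used). Setting $c^I = (x^I-\theta^I(e_i))/m$ then gives exactly $\bar\theta^I(e_i)=x^I$. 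To handle the coframe condition, note $\bar\theta^I$ differs from $\theta^I$ only on $U$, and there $\bar\theta^I=\theta^I + (\text{small}) $; by rescaling — e.g. replacing $f$ by $\epsilon f$ and compensating the tube length, or simply choosing $U'$ thin enough that $\|\phi\,\omega^I\|$ is small in sup-norm while $m$ stays controlled — one keeps the matrix expressing $(\bar\theta^I)$ in terms of $(\theta^I)$ close to the identity on $U$, hence invertible with positive determinant; outside $U$ it is the identity. This secures both linear independence and orientation-compatibility.

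The main obstacle is the simultaneous control required in (a) and (b): making the perturbation large enough along $e_i$ to realize an arbitrary prescribed shift $x^I-\theta^I(e_i)$, yet small enough pointwise to preserve invertibility of $(\bar\theta^I)$. The resolution is that these are not actually in tension: the shift is an \emph{integral} $c^I m$, so I can keep $\|\omega^I\|_\infty$ as small as I like by taking the support tube very thin (small transverse size) and correspondingly elongated / the bump $f$ tuned so that $m$ is held at a fixed value; the integral along $e_i$ does not see the transverse thinness. Once $(\bar\theta^I)$ is produced with the three required integral values adjusted and global-coframe property intact, iterating the construction edge by edge (the perturbation for $e_i$ leaves the others untouched) and then appealing to Lemma~\ref{th-bij-0}, Lemma~\ref{al-b} and the relation \eqref{xith->}–\eqref{al->xi} as already indicated in the text yields Lemma~\ref{theta-x3} and hence Corollary~\ref{cor-Kug}.
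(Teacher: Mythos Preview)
Your additive ansatz $\bar\theta^I=\theta^I+\phi\,\omega^I$ with $\omega^I=f\,c^I\eta$ contains a genuine gap: the claimed resolution of the tension between ``large integral shift'' and ``pointwise small perturbation'' does not work. The quantity $m=\int_{\lambda_0}^{\lambda_1} f\,\eta(\dot{\tilde e}_i)\,d\lambda$ is an integral \emph{along} the edge, so transverse thinness of the tube is irrelevant to it, and the longitudinal extent is bounded by $1$ because the edge is fixed. Concretely, with $\eta(\dot{\tilde e}_i)=1$ one has $m\le\sup f$, hence at the point where $f$ attains its maximum the rank-one perturbation matrix $A^I{}_J=f\,c^I\eta_J$ satisfies $|\mathrm{tr}\,A|=f\,|c^I\eta_I|\ge (\sup f)\,|c^I\eta_I|\,\cdot 1$; since $\det(I+A)=1+\mathrm{tr}\,A$ for a rank-one matrix, a large prescribed shift $x^I-\theta^I(e_i)$ forces this determinant to change sign. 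For instance, take $\Sigma=\R^3$, $\theta^I=dx^I$, $e_i$ the unit segment along the $x^1$-axis, and target $x^I=(-100,0,0)$: then $c^1=-101/m$, and at the maximum of $f$ one gets $\det(I+A)=1-101\,(\sup f)/m\le 1-101<0$. No choice of $\eta$, $f$, or tube geometry rescues this.

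The paper circumvents the obstruction by never relying on a small additive perturbation to produce a large shift. Instead it (i) pulls back by a diffeomorphism supported near a subarc $f_0\subset e_i$ so that the coframe is aligned with the direction $(x^I)$ there, then rescales by a positive function to realize $\theta'^I(f_0)=x^I$---both operations preserve the coframe property automatically; (ii) multiplies $\theta'^1$ by positive functions $<1$ near the complementary subarcs $f_1,f_2$ to make $|\theta''^1(f_1)+\theta''^1(f_2)|<|x^1|/3$; and only then (iii) applies a rank-one correction on $f_0$, whose coefficient is now guaranteed small enough that the determinant stays positive. The non-vanishing hypothesis $(x^I)\neq 0$ is used precisely in step (i) to define the alignment direction and in step (iii) to divide by $x^1$.
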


\begin{proof}
Let $U$ be an open subset of $\Sigma$ such that 
\[
\gamma\cap U=\intr e_{i}.
\]

The main part of the proof will be divided into four steps:
\begin{enumerate}
\item first we will divide the edge $e_{i}$ into three edges $f_0,f_1$ and $f_2$ such that
$e_{i}=f_2\circ f_0\circ f_1$ and then we will fix an open set $U_0\subset U$ such that
\begin{equation}
U_0\cap f_j=
\begin{cases}
\intr f_j & \text{if $j=0$},\\
\varnothing & \text{otherwise}
\end{cases}.
\label{ua}
\end{equation}
\item then we will modify the coframe $(\theta^I)$ on the set $U$ in such a way that the resulting coframe $(\theta^{\prime I})$ will satisfy
\begin{equation}
\theta^{\prime I}(f_0)=x^I.
\label{f0-xI}
\end{equation}
\item next we will deform the coframe $(\theta^{\prime I})$ on $U\setminus f_0$ obtaining thereby a coframe $(\theta^{\prime\prime I})$ for which the vector
\[
\Big(\theta^{\prime\prime I}(f_1)+\theta^{\prime\prime I}(f_2)\Big)\in\R^3
\]  
will meet a special condition.
\item finally, we will modify $(\theta^{\prime\prime I})$ on $U_0$ in such a way that the resulting coframe $(\bar{\theta}^I)$ will satisfy
\[
\bar{\theta}^{I}(f_0)=x^I-\theta^{\prime\prime I}(f_1)-\theta^{\prime\prime I}(f_2).
\] 
\end{enumerate} 
This will finish the main part of the proof since for $(\bar{\theta}^I)$ constructed in this way
\[
\bar{\theta}^I(e)=\bar{\theta}^I(f_1)+\bar{\theta}^I(f_0)+\bar{\theta}^I(f_2)=x^I.
\]

\begin{figure}
\psfrag{U}{$U'$}
\psfrag{e}{$e$}
\psfrag{ch}{$\chi_\lambda$}
\psfrag{y0}{$y_0$}
\psfrag{l}{$\lambda\!-\!\lambda_0$}
\psfrag{ch0}{$\chi_{\lambda_0}$}
\psfrag{s}{$s$}
\psfrag{z}{$(z^1,z^2)=(\lambda\!-\!\lambda_0,s)$}
\begin{center}
\includegraphics[scale=0.7]{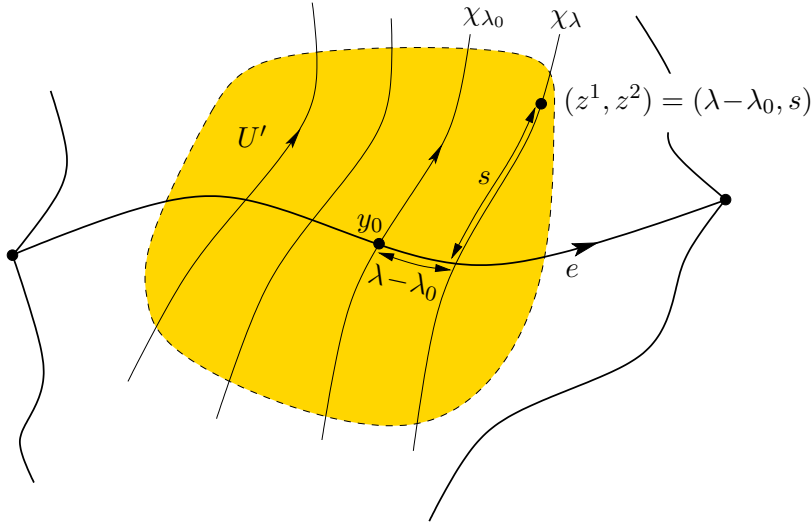}
\end{center}
\caption{Construction of the coordinate frame $(z^1,z^2)$.}
\label{fig-zz}
\end{figure}

\paragraph{Step 1} Denote for simplicity $e_{i}\equiv e$ and fix a standard curve $\tilde{e}$ for the edge.

Let $(Y_I)$ be a frame on $\Sigma$ dual to $(\theta^I)$. Then $Y:=x^IY_I$, where $(x^I)$ is the fixed vector in $\R^3$, is a {\em non-zero} vector field on the manifold. Assume that there exists a point $y_0\in\intr e$ such that the value of $Y$ at $y_0$ {\em is not} tangent to $e$---this assumption will allow us to construct a special coordinate frame on a neighborhood $U'\subset U$ of $y_0$.

To this end consider a bunch of integral curves of the vector field $Y$ which intersect the edge $e$ at points belonging to $U'$---see Figure \ref{fig-zz}. This bunch can be parameterized by the parameter $\lambda$ of the curve $\tilde{e}$: 
\[
\R\ni{s}\mapsto\chi_\lambda({s})\in\Sigma
\]
is an integral curve of $Y$ which intersects\footnote{Of course, we have to choose the neighborhood $U'$ ``small'' enough to ensure that every integral curve in the bunch intersects the set $e\cap U'$ exactly once.} the edge $e$ at the point $\tilde{e}(\lambda)$. Moreover, we can adjust the parameter ${s}$ along each integral curve in the bunch in such a way that $\chi_\lambda({s}=0)$ coincides with the intersection point i.e.   $\chi_\lambda({s}=0)= \tilde{e}(\lambda)$.

Now, if a point $y$ lies on the curve $\chi_\lambda$  i.e. if $y=\chi_\lambda({s})$ then we can associate with it two numbers: 
\[
z^1=\lambda-\lambda_0 \ \ \ \text{and} \ \ \ z^2={s},
\]
where $\tilde{e}(\lambda_0)=y_0$. Thus we obtained a coordinate frame $(z^1,z^2)$ on the bunch. If $U'$ is sufficiently ``small'' then one can find a function $z^3$ on $U'$ such that its values are zero on the bunch and $(z^1,z^2,z^3)$ are coordinates on $U'$. There exists a positive number 
\begin{equation}
\zeta<{\rm min}\{\lambda_0,1-\lambda_0\}\leq\frac{1}{2}
\label{zeta<}
\end{equation}
such that the values of each coordinate {in $(z^1,z^2,z^3)$} ranges at least between $-\zeta$ and $\zeta$.

\begin{figure}
\psfrag{ch0}{$\chi_{\lambda_0}$}
\psfrag{U'}{$U_1$}
\psfrag{U}{$U_0$}
\psfrag{f0}{$f_0$}
\psfrag{f1}{$f_1$}
\psfrag{f2}{$f_2$}
\begin{center}
\includegraphics[scale=0.7]{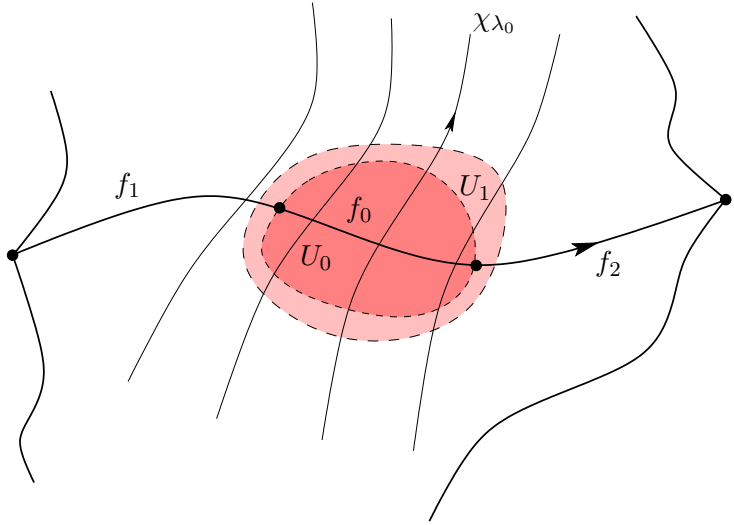}
\end{center}
\caption{The edges $f_0$, $f_1$ and $f_2$ and the sets $U_0$ and $U_1$.}
\label{fff-fig}
\end{figure}

Let us fix a number $0<r<\zeta$ and define the edges   
\[
f_0:=\tilde{e}([\lambda_0-r,\lambda_0+r]), \ \ f_1:=\tilde{e}([0,\lambda_0-r]), \ \ f_2:=\tilde{e}([\lambda_0+r,1]).
\]  
and sets
\begin{align*}
U_0&:=\{\ (z^1,z^2,z^3)\in U' \ |\ (z^1)^2+(z^2)^2+(z^3)^2<r^2\ \},\\
U_1&:=\{\ (z^1,z^2,z^3)\in U' \ |\ (z^1)^2+(z^2)^2+(z^3)^2<(r')^2\ \},
\end{align*}
where $r<r'<\zeta$---see Figure \ref{fff-fig}.

\paragraph{Step 2} Let $Z_0$ be a vector field on $U'$ defined as
\[
Z_0=z^2\partial_{z^1}-z^1\partial_{z^2}.
\]   
\begin{figure}
\psfrag{U}{$U_0$}
\psfrag{tau}{$\tau_{\pi/2}(e)$}
\psfrag{ch0}{$\chi_{\lambda_0}$}
\begin{center}
\includegraphics[scale=0.7]{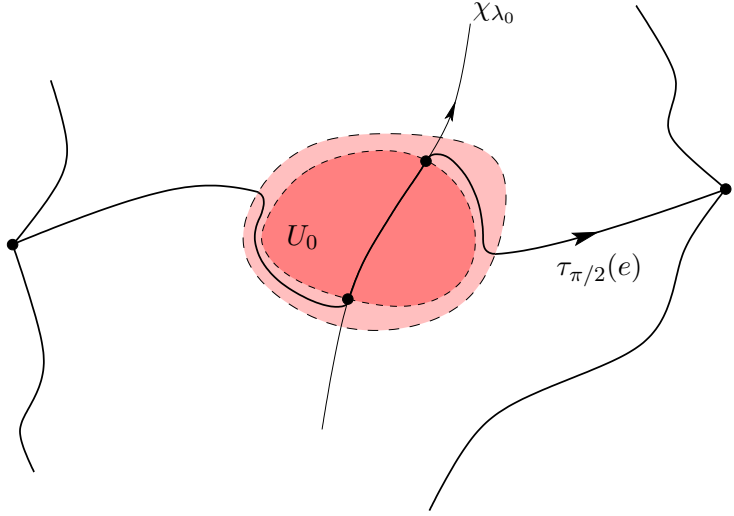}
\end{center}
\caption{The action of $\tau_{\pi/2}$ on the edge $e$.}
\label{fig-tau}
\end{figure}
\noindent If $\phi$ is an almost characteristic function on $U_1$ such that it is equal $1$ on $U_0$ then the local vector field $\phi Z_0$ can be naturally extended to a smooth vector field $Z$ on $\Sigma$ of a compact support which is the closure of $U_1$. Let $\{\tau_t\}_{t\in\R}$ be a one-parameter family of diffeomorphisms on $\Sigma$ generated by $Z$. Clearly, in the coordinate frame $(z^1,z^2,z^3)$ the restriction of diffeomorphism  $\tau_t$ to $U_0$ is a rotation around the $z^3$-axis through an angle $t$. Hence $\tau_{\pi/2}$ maps the $z^1$ axis onto $z^2$ axis, that is, the edge $f_0$ into the image of the integral curve $\chi_0$---see Figure \ref{fig-tau}. Moreover, since $z^1$ is the parameter along the curve $\tilde{e}$ and $z^2$ is the parameter along $\chi_{\lambda_0}$       
\[
\tau_{\pi/2}(\tilde{e}(\lambda))=\chi_{\lambda_0}(\lambda-\lambda_0)
\]                   
provided $\lambda\in [\lambda_0-r,\lambda_0+r]$. Therefore for every $\lambda\in [\lambda_0-r,\lambda_0+r]$ the tangent vector $\dot{\tilde{e}}(\lambda)$  satisfies 
\[
\tau_{\pi/2*}(\dot{\tilde{e}}(\lambda))=Y(\chi_{\lambda_0}(\lambda-\lambda_0)).
\]
Consequently,
\begin{equation}
(\tau_{\pi/2}^*\theta^I)(\dot{\tilde{e}}(\lambda))=\theta^I(Y)=x^I
\label{tau-th-vec}
\end{equation}
and
\[
\int_{f_0}(\tau^*_{\pi/2}\theta^I)=\int_{\lambda_0-r}^{\lambda_0+r}x^I\,d\lambda=2rx^I.
\] 

Let $\phi$ be the almost characteristic function on $U_1$ equal $1$ on $U_0$. Then the following one-forms
\begin{equation}
\theta^{\prime I}:=(\bld{1}+(\frac{1}{2r}-1)\phi)\tau^*_{\pi/2}\theta^I 
\label{th'}
\end{equation}
form a global coframe on $\Sigma$---note that by virtue of the inequalities $r<\zeta$ and \eqref{zeta<} the function $(\bld{1}+(\frac{1}{2r}-1)\phi)$ is positive. This coframe coincides with $(\theta^I)$ outside the set $U_1\subset U$ and satisfies \eqref{f0-xI}  

\begin{figure}
\psfrag{U}{$U_0$}
\psfrag{g0}{$g_0$}
\psfrag{g1}{$g_1$}
\psfrag{g2}{$g_2$}
\psfrag{e1}{$\epsilon_1$}
\psfrag{e2}{$\epsilon_2$}
\psfrag{W}{$\bigcup_\beta W_\beta$}
\psfrag{W'}{$\bigcup_{\beta} W'_{\beta}$}
\begin{center}
\includegraphics[scale=0.7]{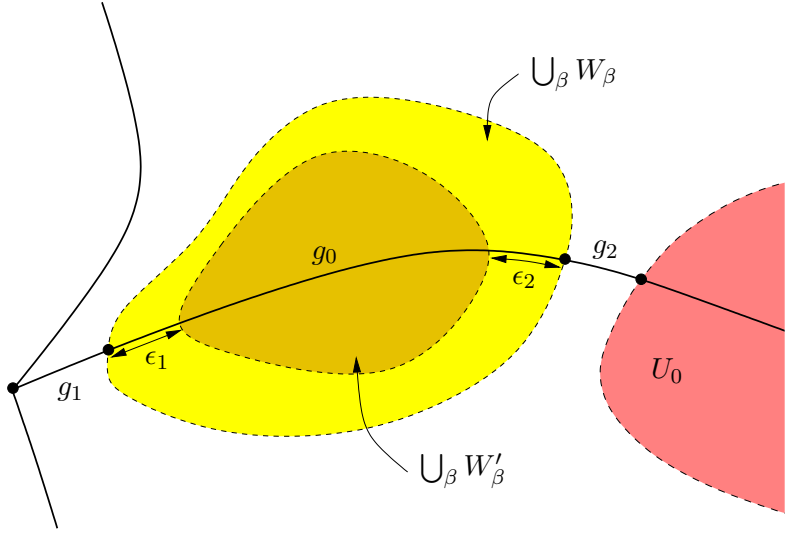}
\end{center}
\caption{Construction of $(\theta^{\prime\prime I})$ on a neighborhood of the edge $f_1$.}
\label{f1-fig}
\end{figure}

\paragraph{Step 3} We assumed that $(x^I)$ is a non-zero vector in $\R^3$. Without loss of generality we can assume that $x^1\neq 0$. Our goal now is to deform the coframe $(\theta^{\prime I})$ on the set $U\setminus f_0$ in such a way that the resulting coframe $(\theta^{\prime\prime I})$ satisfies     
\begin{equation}
|\theta^{\prime\prime 1}(f_1)+\theta^{\prime\prime 1}(f_2)|<\frac{|x^1|}{3}.
\label{tt<x3}
\end{equation}

To this end we divide the edge $f_1$ into edges $g_0,g_1,g_2$ such that $f_1=g_2\circ g_0\circ g_1$---see Figure \ref{f1-fig}---and choose the edges $g_1$ and $g_2$ to be short enough to satisfy
\[
|\theta^{\prime 1}(g_1)|<\frac{|x^1|}{18}\ \ \ \text{and} \ \ \  |\theta^{\prime 1}(g_2)|<\frac{|x^1|}{18}.
\]

To carry out the desired deformation of the coframe $(\theta^{\prime I})$ we proceed as follows: by virtue of the compactness of $g_0$ we can cover $\intr g_0$ by a finite {number} of open subsets $\{W_{\beta}\}$ such that each $W_\beta$ admits existence of an almost characteristic function $\phi_\beta$ on it\footnote{To satisfy this requirement $W_\beta$ may be defined as an open  coordinate ball of non-zero radius.} and
\[
\gamma\cap \Big(\bigcup_\beta W_\beta\Big)=\intr g_0. 
\]        

Let $a_1,a_2\in]0,\lambda_0-r[$ be numbers such that 
\[
g_0=\tilde{e}([a_1,a_2]).
\]  
For each almost characteristic function $\phi_\beta$ on $W_\beta$ we define $W'_\beta\subset W_\beta$ as a set on which $\phi_\beta$ is equal $1$. Since the union $\bigcup_\beta W_\beta$ covers $\intr g_0$  the sets $\{W'_\beta\}$ can be chosen in such a way that the union $\bigcup_\beta W'_\beta$ covers $\tilde{e}([a_1+\epsilon_1,a_2-\epsilon_2])$ for some $\epsilon_1,\epsilon_2>0$.

If a number $\nu$ satisfies $0<\nu<1$ then the function
\begin{equation}
\prod_\beta(\bld{1}-\nu\phi_\beta)
\label{f-prod}
\end{equation}
is positive on $\Sigma$, equal one outside the union $\bigcup_\beta W_\beta$ and it is not greater than $(1-\nu)$ on the union $\bigcup_\beta W'_\beta$ covering the edge $\tilde{e}([a_1+\epsilon_1,a_2-\epsilon_2])$. Therefore
\begin{multline}
\Big|\int_{g_0} \prod_\beta(\bld{1}-\nu\phi_\beta)\theta^{\prime 1}\Big|=\Big|\int_{a_1}^{a_2}\prod_\beta(\bld{1}-\nu\phi_\beta)\theta^{\prime 1}(\dot{\tilde{e}})\,d\lambda\Big|\leq\Big|\int_{a_1}^{a_1+\epsilon_1}\prod_\beta(\bld{1}-\nu\phi_\beta)\theta^{\prime 1}(\dot{\tilde{e}})\,d\lambda\Big|+\\+\Big|\int_{a_1+\epsilon_1}^{a_2-\epsilon_2}\prod_\beta(\bld{1}-\nu\phi_\beta)\theta^{\prime 1}(\dot{\tilde{e}})\,d\lambda\Big|+\Big|\int_{a_2-\epsilon_2}^{a_2}\prod_\beta(\bld{1}-\nu\phi_\beta)\theta^{\prime 1}(\dot{\tilde{e}})\,d\lambda\Big|
\label{int-g-0}
\end{multline}
and
\begin{multline}
\Big|\int_{a_1+\epsilon_1}^{a_2-\epsilon_2}\prod_\beta(\bld{1}-\nu\phi_\beta)\theta^{\prime 1}(\dot{\tilde{e}})\,d\lambda\Big|\leq \int_{a_1+\epsilon_1}^{a_2-\epsilon_2}\Big|\prod_\beta(\bld{1}-\nu\phi_\beta)\theta^{\prime 1}(\dot{\tilde{e}})\Big| d\lambda\leq \\ \leq(1-\nu)(a_2-\epsilon_2-a_1-\epsilon_1)\underset{\lambda\in [a_1+\epsilon_1,a_2-\epsilon_2]}{\rm supp}|\theta^{\prime 1}(\dot{\tilde{e}}(\lambda))|<\\<(1-\nu)(a_2-a_1)\underset{\lambda\in [a_1,a_2]}{\rm supp}|\theta^{\prime 1}(\dot{\tilde{e}}(\lambda))|
\label{rest}
\end{multline}

Note now that modifying appropriately the function \eqref{f-prod} we can  make the value of the l.h.s. of \eqref{int-g-0} as small as we want. This can be achieved by $(i)$ choosing $\nu$ as close to $1$ as we want and $(ii)$ choosing the functions $\{\phi_\beta\}$ in such a way that the sets $\{W'_\beta\}$ determine the values of $\epsilon_1,\epsilon_2$ as close to zero as we want. An important observation is that the restriction \eqref{rest} is independent of $\epsilon_1,\epsilon_2$. Consequently we can first choose  $\nu$ to restrict appropriately the value of the second term at the r.h.s. of \eqref{int-g-0} and then we can choose values of $\epsilon_1,\epsilon_2$ to restrict the values of the first and the third terms {\em without spoiling} the restriction imposed on the second one.         

An analogous construction done for the edge $f_2$ provides us with a function 
\[
\prod_{\beta'}(\bld{1}-\nu'\phi'_{\beta'}).
\]

Let
\[
\theta^{\prime \prime I}:=
\begin{cases}
\prod_\beta(\bld{1}-\nu\phi_\beta)\prod_{\beta'}(\bld{1}-\nu'\phi'_{\beta'})\,\theta^{\prime 1}, & \text{for $I=1$}\\
\theta^{\prime I} & \text{otherwise}
\end{cases}.
\]
For the number $\nu$ and the functions $\{\phi_\beta\}$ appropriately chosen 
\[
|\theta^{\prime \prime 1}(g_0)|=\Big|\int_{g_0}\prod_\beta(\bld{1}-\nu\phi_\beta)\theta^{\prime 1}\Big|<\frac{|x^1|}{18}.
\]         
Thus
\[
|\theta^{\prime \prime 1}(f_1)|<|\theta^{\prime \prime 1}(g_1)|+|\theta^{\prime \prime 1}(g_0)|+|\theta^{\prime \prime 1}(g_2)|<\frac{|x^1|}{6}.
\]
Similarly, we can choose the number $\nu'$ and the functions $\{\phi'_{\beta'}\}$ in such a way that
\[
|\theta^{\prime \prime 1}(f_2)|<\frac{|x^1|}{6}.
\]
In this way we obtained a coframe $(\theta^{\prime \prime I})$ which satisfies \eqref{tt<x3}.

\paragraph{Step 4} Recall that $U_0$ is an open set such that $\gamma\cap U_0=\intr f_0$. We keep assuming that $x^1\neq 0$. Denote 
\[
t^I\equiv \theta^{\prime\prime I}(f_1)+\theta^{\prime\prime I}(f_2)
\]    
and consider the following one-forms
\begin{equation}
\bar{\theta}^I=\theta^{\prime\prime I}-3\frac{t^I}{x^1}\,\phi \,\theta^{\prime\prime 1},
\label{b-th}
\end{equation}
where $\phi$ is an almost characteristic function on $U_0$ such that
\[
\int_{\lambda_0-r}^{\lambda_0+r} \phi(\tilde{e}(\lambda))\,d\lambda=\frac{2r}{3}.
\]  

Note that $(\bar{\theta}^I)$ is a global coframe---indeed, the determinant of a matrix defining the transformation \eqref{b-th} between $(\theta^{\prime\prime I})$ and $(\bar{\theta}^I)$ at a point $x\in\Sigma$ is equal to
\begin{equation}
1-3\frac{t^I}{x^1}\phi(x).
\label{tr-mtx-det}
\end{equation}
By virtue of \eqref{tt<x3}
\[
1>3\frac{|t^I|}{|x^1|}\geq 3\frac{|t^I|}{|x^1|}\phi(x)\geq 3\frac{t^I}{x^1}\phi(x).
\]
Consequently, the determinant  \eqref{tr-mtx-det} is positive and the transformation \eqref{b-th} is invertible at every point $x\in\Sigma$.

On the other hand, by virtue of \eqref{th'} and \eqref{tau-th-vec} for $\lambda\in[\lambda_0-r,\lambda_0+r]$ 
\[
{\theta}^{\prime\prime 1}(\dot{\tilde{e}}(\lambda))=\frac{x^1}{2r}
\]   
and
\begin{multline*}
\bar{\theta}^I(f_0)=\int_{f_0}\bar{\theta}^I=\int_{f_0}\theta^{\prime\prime I}-3\frac{t^I}{x^1}\int_{\lambda_0-r}^{\lambda_0+r} \phi\, \theta^{\prime\prime 1}(\dot{\tilde{e}})\,d\lambda=x^I-3\frac{t^I}{2r}\int_{\lambda_0-r}^{\lambda_0+r} \phi(\tilde{e}(\lambda))\,d\lambda=\\=x^I-t^I=x^I-\bar{\theta}^I(f_1)-\bar{\theta}^I(f_2).
\end{multline*}

This finishes the main part of the proof.

\paragraph{Final remarks} Recall that while carrying out Step 1 we were assuming that there exists a point $y_0\in\intr e$ such that the value of the vector field $Y=x^IY_I$ at this point is not tangent to $e$. If there is no such point then Step 1 should be preceded by a modification of the original coframe $(\theta^I)$ on the set $U$ which may consist in a pull-back of the coframe by means of a diffeomorphism similar to that applied in Step 2. Choosing appropriately the diffeomorphism one can obtain a coframe satisfying the assumption.      

Note also that each transformation of the coframes used in the proof preserves both smoothness of the coframes and the orientation of $\Sigma$ defined by the original coframe $(\theta^I)$.  
\end{proof}

Formulating Lemma \ref{theta-x3-nz} we assumed that the vector $(x^I)$ is non-zero. Let us divide the edge $e_{i}$ considered in the lemma into two edges $f_1$ and $f_2$ such that $e_{i}=f_2\circ f_1$. By virtue of the lemma the coframe $(\theta^I)$ can be deformed to a coframe $(\bar{\theta}^I)$ such that
\begin{align*}
\bar{\theta}^I(f_1)&=x^{\prime I},\\
\bar{\theta}^I(f_2)&=-x^{\prime I},\\
\bar{\theta}^I(e_{j})&={\theta}^I(e_{j}), \ \ \ j\neq i\\
\end{align*}
for some non-zero vector $(x^{\prime I})$. Thus 
\[
\bar{\theta}^I(e_{i})=-x^{\prime I}+x^{\prime I}=0.
\]

\begin{cor}
Lemma \ref{theta-x3-nz} holds also in the case of $(x^I)=0$. 
\end{cor}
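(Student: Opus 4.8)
The plan is to deduce the degenerate case $(x^I)=0$ from the non-degenerate case of Lemma~\ref{theta-x3-nz} already proven, using the elementary fact that $0$ is the difference of two copies of one non-zero vector. First I would pick an interior point $y$ of the edge $e_i$ (say $y=\tilde{e}(\lambda_1)$ for some $\lambda_1\in\,]0,1[$, in terms of a standard curve $\tilde{e}$ for $e_i$) and write $e_i$ as a composition $e_i=f_2\circ f_1$ of two oriented simple edges, with $f_1$ running from $\src$ of $e_i$ to $y$ and $f_2$ from $y$ to $\tar$ of $e_i$. Since $f_1\cup f_2=e_i$ and $e_i$ is independent of the remaining edges of $\gamma$, the collection $\gamma':=(\gamma\setminus\{e_i\})\cup\{f_1,f_2\}$ is again a graph of oriented simple edges, to which Lemma~\ref{theta-x3-nz} applies.

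Next I would fix an arbitrary non-zero vector $(x^{\prime I})\in\R^3$ and invoke Lemma~\ref{theta-x3-nz} twice. Its first application, to the graph $\gamma'$, the edge $f_1$ and the vector $(x^{\prime I})$, deforms $(\theta^I)$ into an orientation-preserving smooth global coframe $(\theta^{\prime I})$ with $\theta^{\prime I}(f_1)=x^{\prime I}$ and with $\theta^{\prime I}(f_2)=\theta^I(f_2)$, $\theta^{\prime I}(e_j)=\theta^I(e_j)$ for $j\neq i$. Its second application, now to $\gamma'$, the edge $f_2$ and the vector $-(x^{\prime I})$, deforms $(\theta^{\prime I})$ into an orientation-preserving smooth global coframe $(\bar{\theta}^I)$ with $\bar{\theta}^I(f_2)=-x^{\prime I}$, and with $\bar{\theta}^I(f_1)=\theta^{\prime I}(f_1)=x^{\prime I}$ and $\bar{\theta}^I(e_j)=\theta^{\prime I}(e_j)=\theta^I(e_j)$ for $j\neq i$ left untouched. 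By the additivity~\eqref{ee-omega} of the functionals $\theta^A(\cdot)$ under composition,
\[
\bar{\theta}^I(e_i)=\bar{\theta}^I(f_1)+\bar{\theta}^I(f_2)=x^{\prime I}-x^{\prime I}=0,
\]
which is precisely the assertion of Lemma~\ref{theta-x3-nz} for $(x^I)=0$; smoothness and the orientation of $\Sigma$ are preserved because each of the two intermediate deformations preserves them.

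I do not expect any real obstacle here; the only points to check are bookkeeping ones: that cutting an oriented simple edge at an interior point of its standard curve does produce two oriented simple edges, and that passing from $\gamma$ to $\gamma'$ creates no new intersections among the edges (both immediate since $f_1\cup f_2=e_i$). Conceptually, the hypothesis $(x^I)\neq0$ enters Lemma~\ref{theta-x3-nz} only through the requirement that $Y=x^IY_I$ be a non-vanishing vector field, and this is bypassed by realizing the value $0$ on $e_i$ as $x^{\prime I}$ on a first half and $-x^{\prime I}$ on the second half, each being an admissible input to the non-degenerate case.
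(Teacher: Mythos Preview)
Your argument is correct and follows essentially the same route as the paper: split $e_i$ into two sub-edges, apply the non-degenerate lemma to hit $x'^I$ on one piece and $-x'^I$ on the other, and use additivity to obtain $0$ on $e_i$ while leaving the remaining edges unchanged. Your version is slightly more explicit (you spell out the refined graph $\gamma'$ and apply the lemma twice in sequence), but the idea and the mechanism are identical to the paper's proof.
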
 

An immediate consequence of Lemma \ref{theta-x3-nz} and  the corollary above is Lemma \ref{theta-x3} restricted to graphs built from oriented simple edges only. But because every edge is a composition of simple edges and because of \eqref{ee-omega} Lemma \ref{theta-x3} it true without any restrictions.

\section{Proof of Lemma \ref{theta-x} \label{theta-x-prf}}

We will prove slightly stronger versions of Lemma \eqref{theta-x}---the versions are obtained by replacing in the lemma the condition $\theta\in\Theta$ by, respectively, $\theta\in\Theta_+$ and $\theta\in\Theta_-$.

While proving the lemma we will use the notation and some notions introduced in Section \ref{app-prel}.

In fact, it is enough to prove the lemma for every graphs built from {\em oriented simple} edges. Consider then a graph $\gamma$ being a collection $\{e_1,\ldots,e_N\}$ of such edges. Let us divide each edge $e_{{j}}$ of the graph into three edges $f_{{j}1},f_{{j}2}$ and $f_{{j}3}$ such that $e_{{j}}=f_{{j}3}\circ f_{{j}2}\circ f_{{j}1}$. Given $(x^A_{{j}})\in\mathbb{M}^N$, by virtue of Lemma \ref{theta-x3} there exists a global coframe $(\theta^I)$ on $\Sigma$ compatible (incompatible) with the orientation of the manifold such that for every ${j}=1,2,\ldots,N$ 
\begin{equation}
\begin{aligned}
\theta^1(f_{{j}1})&=-x^0_{{j}}, & \theta^1(f_{{j}2})&=x^0_{{j}}, & \theta^1(f_{{j}3})&=x^1_{{j}},\\
\theta^2(f_{{j}1})&=0, & \theta^2(f_{{j}2})&=0, & \theta^2(f_{{j}3})&=x^2_{{j}},\\
\theta^3(f_{{j}1})&=0, & \theta^3(f_{{j}2})&=0, & \theta^3(f_{{j}3})&=x^3_{{j}}. 
\end{aligned}
\label{th-I-an}
\end{equation}

To prove the lemma it is enough to find a smooth function $\alpha_1$ on $\Sigma$ such that at every point $x\in\Sigma$  
\begin{equation}
(\alpha_1(x))^2<1
\label{al<1}
\end{equation}
and 
\begin{align}
\int_{f_{{j}1}}\alpha_1\theta^1&=\frac{x^0_{{j}}}{2}, & \int_{f_{{j}2}}\alpha_1\theta^1&=\frac{x^0_{{j}}}{2},&\int_{f_{{j}3}}\alpha_1\theta^1&=0.
\label{th-0-an}
\end{align}
Indeed, if $\alpha_1$ satisfies both conditions above then choosing additionally functions $\alpha_2:=\alpha_3:=0$ and defining
\[
\theta^0:=\alpha_I\theta^I
\] 
we obtain a quadruplet $(\theta^0,\theta^I)$ of one-forms.  By virtue of Lemma \ref{al-th} the quadruplet is an element of $\Theta$. Moreover, if $(\theta^I)$ is compatible (incompatible) with the orientation of $\Sigma$ then the quadruplet belongs to  $\Theta_+$ ($\Theta_-$). Due to \eqref{th-I-an} and \eqref{th-0-an} 
\begin{align*}
\theta^0(e_{{j}})&=\int_{f_{{j}1}}\alpha_1\theta^1+\int_{f_{{j}2}}\alpha_1\theta^1+\int_{f_{{j}3}}\alpha_1\theta^1=\frac{x^0_{{j}}}{2}+\frac{x^0_{{j}}}{2}+0=x^0_{{j}},\\
\theta^{1}(e_{{j}})&=\theta^1(f_{{j}1})+\theta^1(f_{{j}2})+\theta^1(f_{{j}3})=-x^0_{{j}}+x^0_{{j}}+x^1_{{j}}=x^1_{{j}},\\
\theta^{2}(e_{{j}})&=\theta^2(f_{{j}1})+\theta^2(f_{{j}2})+\theta^2(f_{{j}3})=0+0+x^2_{{j}}=x^2_{{j}},\\
\theta^{3}(e_{{j}})&=\theta^3(f_{{j}1})+\theta^3(f_{{j}2})+\theta^3(f_{{j}3})=0+0+x^3_{{j}}=x^3_{{j}}.
\end{align*}

\begin{figure}
\psfrag{a+}{$\alpha_1>0$}
\psfrag{a-}{$\alpha_1<0$}
\psfrag{f1}{$f_{{j}1}$}
\psfrag{f2}{$f_{{j}2}$}
\psfrag{f3}{$f_{{j}3}$}
\psfrag{g1}{$g_{{j}1}$}
\psfrag{g2}{$g_{{j}2}$}
\psfrag{W1}{$\bigcup_\beta W^{{{j}1}}_\beta$}
\psfrag{W2}{$\bigcup_{\beta'} W^{{{j}2}}_{\beta'}$}
\begin{center}
\includegraphics[scale=0.7]{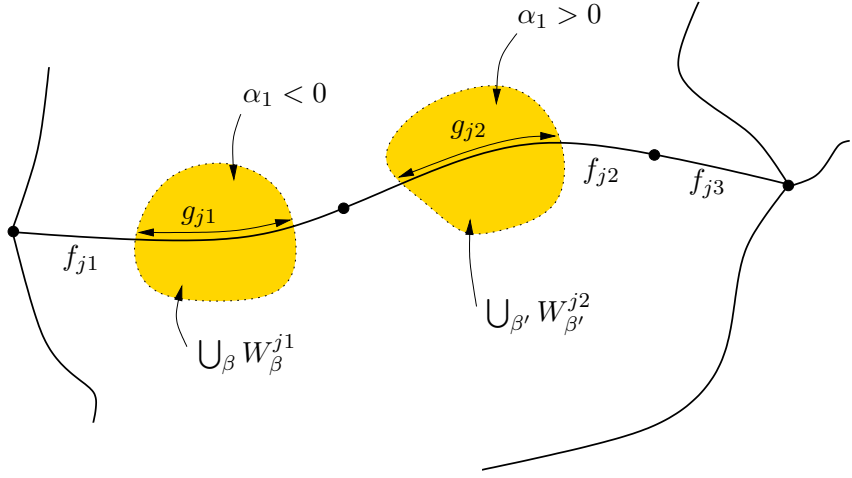}
\end{center}
\caption{Construction of $\alpha_1$ on a neighborhood of the edge $e_{{j}}$.}
\label{alpha-fig}
\end{figure}

Let us then start a construction of the desired function $\alpha_1$. Since now till the end of the proof we will exclude from our considerations the edges  $\{f_{{j}3}\}$ and focus solely on edges $\{f_{{j}a}\}$ with $a=1,2$. Moreover, since now till Equation \eqref{varphi=0} we will restrict ourselves to those edges $\{f_{{j}a}\}$ for which the corresponding $x^0_{{j}}\neq 0$. 

Let us fix a standard curve $\tilde{f}_{{j}a}$ for every edge $f_{{j}a}$ under consideration. For each $f_{{j}a}$ there exist numbers $\lambda_{-},\lambda_{+}\in[0,1]$ ($\lambda_-<\lambda_+$)  such that 
\begin{align*}
\int_0^{\lambda_-} \theta^1(\dot{\tilde{f}}_{{j}a})\,d\lambda&=(-1)^a\frac{1}{5}x^0_{{j}},\\
\int_0^{\lambda_+} \theta^1(\dot{\tilde{f}}_{{j}a})\,d\lambda&=(-1)^a\frac{4}{5}x^0_{{j}}, 
\end{align*}
where as before $\dot{\tilde{f}}_{{j}a}(\lambda)$ denotes  a vector tangent to the corresponding standard curve at the point $\tilde{f}_{{j}a}(\lambda)$. Let    
\[
g_{{j}a}:=\tilde{f}_{{j}a}([\lambda_-,\lambda_+]).
\]
Consequently,
\begin{equation}
\int_{g_{{j}a}}\theta^1=\int_{\lambda_-}^{\lambda_+} \theta^1(\dot{\tilde{f}}_{{j}a})\,d\lambda=(-1)^a\frac{3}{5}x^0_{{j}}.
\label{3/5}
\end{equation}

By virtue of the compactness of $g_{{j}a}$ we can cover $\intr g_{{j}a}$ by a finite {number} of open subsets $\{W^{{j}a}_\beta\}$ such that each $W^{{j}a}_\beta$ admits existence of an almost characteristic function $\phi^{{j}a}_\beta$ on it\footnote{To satisfy this requirement $W^{{j}a}_\beta$ may be defined as an open  coordinate ball of non-zero radius.} and
\begin{gather}
\gamma\cap {\Big(\bigcup_\beta W^{{j}a}_\beta\Big)}=\intr g_{{j}a},\label{g-W}\\
\overline{\Big(\bigcup_\beta W^{{j}a}_\beta\Big)}\cap\overline{\Big(\bigcup_{\beta'} W^{j'a'}_{\beta'}\Big)}=\varnothing\label{supports}
\end{gather}
if only ${j}\neq j'$ or $a\neq a'$---see Figure \ref{alpha-fig}. 

There exists a positive number $\zeta^{{j}a}$ such that the function   
\[
\phi^{{j}a}:=\zeta^{{j}a}\sum_\beta \phi^{{j}a}_\beta
\]
is {positive but lower than $1$ on $\bigcup_\beta W^{{j}a}_\beta$, is equal zero outside this set}  and  
\begin{equation}
\Big|\int_{g_{{j}a}}\phi^{{j}a}\theta^1\Big|<\frac{2}{5}|x^0_{{j}}|.
\label{2/5}
\end{equation}
Consider now a family of function
\[
\varphi^{{j}a}_{\sigma}:=\bld{1}-(\bld{1}-\phi^{{j}a})^\sigma,
\]
where the number $\sigma\geq 1$. It follows from the properties of $\phi^{{j}a}$ that for every allowed $\sigma$
\begin{equation}
0\leq \varphi^{{j}a}_{\sigma}< 1.
\label{varphi<1}
\end{equation}
Moreover, the function $\varphi^{{j}a}_{\sigma}$ is a smooth\footnote{Note that for $n>\sigma$ the $n$-th derivative of $\varphi^{{j}a}_{\sigma}$ contains a factor $(\bld{1}-\phi^{{j}a})^{\sigma-n}$ which could be a source of non-differentiability of $\varphi^{{j}a}_{\sigma}$ if a value of a function $(\bld{1}-\phi^{{j}a})$ was zero. This is, however, not the case---the function is positive everywhere.} function of a compact support being the closure of $\bigcup_\beta W^{{j}a}_\beta$:
\begin{equation}
\supp \varphi^{{j}a}_{\sigma}=\overline{\Big(\bigcup_\beta W^{{j}a}_\beta\Big)}.
\label{supp}
\end{equation} 
These properties of $\varphi^{{j}a}_{\sigma}$ guarantee that for every $\lambda\in[\lambda_-,\lambda_+]$ and for every $\sigma\in[1,\infty[$ 
\[
|\varphi^{{j}a}_{\sigma}\theta^1(\dot{\tilde{f}}_{{j}a})|\leq |\theta^1(\dot{\tilde{f}}_{{j}a})|.
\]
Because the latter function is integrable on $[\lambda_-,\lambda_+]$ the  Lebesgue's dominated convergence theorem allows us to conclude that the following function 
\[
\sigma \mapsto \Phi^{{j}a}(\sigma):=\int_{g_{{j}a}} \varphi^{{j}a}_{\sigma}\theta^1=\int_{\lambda_-}^{\lambda_+} \varphi^{{j}a}_{\sigma}\theta^1(\dot{\tilde{f}}_{{j}a})\,d\lambda
\] 
is continuous on $[1,\infty[$. If $\sigma=1$ then $\varphi^{{j}a}_\sigma=\phi^{{j}a}$ and (see \eqref{2/5})
\[
|\Phi^{{j}a}(1)|<\frac{2}{5}|x^0_{{j}}|. 
\]  
Moreover, {the properties of $\phi^{ja}$ guarantee that} if the parameter $\sigma$ goes to the infinity then the family $\{\varphi^{{j}a}_{\sigma}\}$ converges pointwisely to the characteristic function on $\bigcup_\beta W^{{j}a}_\beta$. Thus by virtue of the Lebesgue's theorem and \eqref{3/5} 
\[
\lim_{\sigma\to\infty} \Phi^{{j}a}(\sigma)=(-1)^a\frac{3}{5}x^0_{{j}}.  
\]
All these mean that for each $\Phi^{{j}a}$ there exist $\sigma\in[1,\infty[$ such that
\begin{equation}
\Phi^{{j}a}(\sigma)=\int_{g_{{j}a}} \varphi^{{j}a}_{\sigma}\theta^1=(-1)^a\frac{x^0_{{j}}}{2}.
\label{Phi-Ji}
\end{equation}
Denote the function  $\varphi^{{j}a}_{\sigma}$ with this special $\sigma$ by $\varphi^{{j}a}$.

In this way we defined functions $\{\varphi^{{j}a}\}$ for indices $\{{j}\}$ such that $x^0_{{j}}\neq 0$. If $x^0_{{j}}=0$ then set
\begin{equation}
\varphi^{ja}=0
\label{varphi=0}
\end{equation}
everywhere on $\Sigma$.
 
Now to finish the proof it is enough to define
\[
\alpha_1:=\sum_{{j}=1}^N\sum_{a=1}^2 (-1)^a\varphi^{{j}a}.
\]
Indeed, by virtue of \eqref{supports} supports of the functions $\{\varphi^{{j}a}\}$ are pairwise disjoint. This fact, \eqref{varphi<1} and \eqref{varphi=0} guarantee that $\alpha_1$ satisfies \eqref{al<1}. Moreover, because of \eqref{g-W}, {\eqref{supp}} and \eqref{varphi=0}
\[
\gamma\cap \supp \varphi^{{j}a}=
\begin{cases}
g_{{j}a}\subset \intr f_{{j}a} & \text{if $x^0_{{j}}\neq 0$,}\\
\varnothing & \text{otherwise}
\end{cases}.
\]
Taking into account \eqref{Phi-Ji} we conclude that Equations \eqref{th-0-an} are satisfied.



\end{document}